\newtheorem{theorem}{Theorem}
\newtheorem{lemma}{Lemma}
\def\no{\noindent}
\def \A{\boldsymbol{A}}
\def \S{\boldsymbol{S}}
\def \V{\boldsymbol{V}}
\begin{document}

{\centering {\large {\bf Semiparametric Inference of the Youden Index and the Optimal Cutoff Point under  Density Ratio Models}} \par}

\bigskip

\centerline{Meng Yuan, \ Pengfei Li \ and \ Changbao Wu\footnote{Meng Yuan is doctoral student, Pengfei Li is Professor and Changbao Wu is Professor, Department of Statistics and Actuarial Science, University of Waterloo, Waterloo ON N2L 3G1,Canada (E-mails: {\em m33yuan@uwaterloo.ca}, \ {\em pengfei.li@uwaterloo.ca} \ and \ {\em cbwu@uwaterloo.ca}).}}

\bigskip

\bigskip

\hrule

{\small
\begin{quotation}
\no
The Youden index is a popular summary statistic for receiver operating characteristic curve. 
It  gives the optimal cutoff point of a biomarker to distinguish the diseased and healthy individuals.
In this paper, we propose to model the distributions of a biomarker for individuals in the healthy and diseased groups
via a semiparametric density ratio model. 
Based on this model, we use the maximum empirical likelihood method to estimate the Youden index and the optimal cutoff point. 
We further establish the asymptotic normality of the proposed estimators and 
construct valid confidence intervals for the Youden index and the corresponding optimal cutoff point.
The proposed method automatically covers both cases when there is no lower limit of detection (LLOD) and when there is a fixed and finite LLOD for the biomarker.
Extensive simulation studies and a real data example are used to illustrate the effectiveness of the proposed method and its advantages  over the existing methods. 

\vspace{0.3cm}

\no
KEY WORDS:\ Density ratio model; Empirical likelihood; Optimal cutoff point; ROC curve; Youden index. 
\end{quotation}
}

\hrule

\bigskip

\bigskip

\section{INTRODUCTION}
\label{intro}

Receiver operating characteristic (ROC) curve is a widely used statistical tool in medical research to 
evaluate the discriminatory effectiveness of a biomarker
for distinguishing diseased individuals from healthy ones.  
When the sampling distribution of the biomarker  is continuous, 
the ROC curve plots the proportion of true positive (sensitivity) versus proportion of false positive (one minus specificity)
across all possible choices of threshold values, called cutoff points, of the biomarker. 
We refer to \cite{zhou2002statistical}, \cite{pepe2003statistical}, \cite{krzanowski2009roc}, \cite{zou2011statistical, chen2016using}, and reference therein for comprehensive reviews and recent developments in ROC analysis.  

The Youden index, first proposed by \cite{youden1950index}, is one of popular summary statistics of the ROC curve. 
It is defined as the maximum of  the sum of sensitivity and specificity minus one 
when the relative seriousness of  false positive and false negative are treated equally. 
The Youden index ranges from 0 to 1 with 1 indicating  a complete separation of distributions of biomarkers in healthy  and diseased populations and 0 indicating a complete overlap. 
It has the advantage of providing a criterion to choose the ``optimal" cutoff point, which maximizes the sum of sensitivity and specificity minus one. 
See \cite{fluss2005estimation} for more discussions on the advantages of the Youden index. 

In medical researches, larger values of biomarkers are generally associated with diseases. Therefore, an individual is classified as diseased when the biomarker of the individual is greater than a given cutoff point. 
Let $F_0$ and $F_1$ denote the cumulative distribution functions (CDFs) of the healthy population and the diseased  population, respectively. 
Then the sensitivity and the specificity are respectively equal to $1 - F_1(x)$ and $F_0(x)$ for the given cutoff point $x$. 
Therefore,  the Youden index can be equivalently expressed as
\begin{equation}    \label{J}
    J= {\rm max}_x\{F_0(x) - F_1(x)\}= F_0(c) - F_1(c),
\end{equation}
where $c$ is the corresponding optimal cutoff point. 
In this paper, we aim to develop efficient inferential procedures for $J$ and $c$. 

In the literature, there are two types of methods, namely, the parametric method and the nonparametric method, for estimating the Youden index $J$ and the corresponding optimal cutoff point $c$.
For the parametric method, 
the original biomarkers or the biomarkers after the Box-Cox transformation \citep{box1964analysis} in the healthy and diseased groups
are assumed to come from the same parametric distribution family \citep{fluss2005estimation,bantis2019construction}.
The nonparametric method employs techniques such as the empirical CDF (ECDF) method  or the kernel method  to obtain the estimators of $F_0$ and $F_1$, 
which are then used to obtain the point estimators of $J$ and $c$. 
More details about the ECDF-based and kernel-based methods, and their modified versions
can be found in \cite{hsieh1996nonparametric}, \cite{zhou2012new}, and \cite{shan2015improved}. Recently, \cite{bantis2019construction} employed hazard constrained natural spline (HCNS) as an alternative nonparametric approach to estimate $J$ and $c$. 
The delta and bootstrap methods \citep{schisterman2007confidence,yin2014joint, bantis2019construction} and  the empirical likelihood (EL) methods \citep{wang2017smoothed} are used to construct confidence intervals (CIs) for $J$ and $c$.

In application, the measurement of a biomarker may have a fixed and finite lower limit of detection (LLOD). 
See, for example,  \cite{ruopp2008youden}, \cite{bantis2017estimation}, and the references therein.  
\cite{ruopp2008youden} adapted the parametric method, the ECDF method, and the ROC-generalized linear model (ROC-GLM) method \citep{pepe2000interpretation,alonzo2002distribution,pepe2003statistical} to obtain point estimates and construct CIs for $J$ and $c$ in this situation.

Generally speaking,  the parametric likelihood based estimators of $(J,c)$ are quite efficient, 
but may not be robust to model misspecifications \citep{fluss2005estimation}.
The nonparametric method is free from the model assumptions on $F_0$ and $F_1$, but
the resulting estimators of $(J,c)$, especially the estimator of $c$,  may be inefficient. 
When there is no LLOD, \cite{hsieh1996nonparametric} showed that 
the convergence rates of  the ECDF-based and the kernel-based estimators of $c$
are slower than $n^{-1/2}$, where $n$ is the total sample size.

In this paper, we develop a semiparametric method that enables efficient estimation of both $J$ and $c$
without making risky parametric assumptions on $F_0$ and $F_1$. 
In medical researches, the two populations under consideration usually share certain common characteristics \citep{zhuang2019semiparametric,qin2003using}. 
To incorporate the information from both samples, we suggest to use the density ratio model (DRM), proposed by \cite{anderson1979multivariate} and \cite{qin1997goodness}, to link
$F_0$ and $F_1$ as
\begin{equation}
    dF_1(x) = \exp\{\alpha + \boldsymbol{\beta}^T\mathbf{q}(x)\}dF_0(x) = \exp\{\boldsymbol{\theta}^T\mathbf{Q}(x)\}dF_0(x),
    \label{drm}
\end{equation}
where $dF_k(x)$ denotes the density of $F_k(x)$ for $k = 0,1$, the $\mathbf{q}(x)$ is a pre-specified, non-trivial function of dimension $p$ and $\boldsymbol{\theta}^T = (\alpha,\boldsymbol{\beta}^T)$ are unknown parameters. Note that $\mathbf{Q}(x)^T = (1, \mathbf{q}(x)^T)$. 
The unspecified baseline distribution $F_0$ makes DRM a semiparametric model. 
The DRM is quite flexible and includes many commonly used distribution families, such as normal, lognormal, and gamma distributions, as special cases. 

In the literature, DRMs have been used as a platform to  study inferential problems
for the ROC curve and  the area under the curve (AUC). 
Under the DRM, 
\cite{qin2003using} considered the estimation of the ROC curve and the AUC; 
\cite{zhang2006semiparametric} proposed a Wald-type statistic to test whether the accuracy of a diagnostic test is acceptable in terms of the AUC; 
\cite{wan2007smoothed} constructed a smoothed ROC curve estimator; 
and \cite{jiang2012inference} proposed two estimators for  the AUC with censored data. 
Inspired by \cite{qin2006empirical}, 
\cite{wang2014semiparametric} proposed an EL ratio based CI for the AUC under DRMs.
Later on, \cite{wan2008comparing} and \cite{ zhang2014semiparametric} considered the inference problems 
for the difference of AUCs for  two correlated ROC curves under a DRM.
Other applications of DRMs include 
multiple sample   hypothesis testing problems \citep{cai2017hypothesis,WANG2017, WANG2018}, quantile and quantile-function estimation \citep{chen2013quantile}, dominance index estimation \citep{zhuang2019semiparametric}.
More detailed reviews can be found in \cite{WangThesis}. 
In general, the inference procedures based on DRMs are more efficient than the fully nonparametric procedures.  To the best of our knowledge, the inference procedures for $(J,c)$ under a DRM have not been studied in the existing literature. 
This paper fills the void.

Our contributions can be summarized as follows. 
We construct the maximum EL estimators (MELEs)  of $(J,c)$ under a DRM based on the data with a LLOD, 
which automatically includes the case without a LLOD by setting the LLOD to be $-\infty$. 
We  establish the $\sqrt{n}$ convergence rates and the asymptotic normality of the proposed estimators of $(J,c)$
for data without a LLOD or with a fixed and finite LLOD. 
Our results show that when there is no LLOD the proposed estimator of $c$ has faster convergence rate than the existing nonparametric estimators,
and the proposed estimator of $J$ is asymptotically more efficient than the existing nonparametric estimators. 
When there is a fixed and finite LLOD, 
our proposed method is the first semiparametric or nonparametric method with rigorous theoretical justifications. 
Simulation experiments show that the proposed estimators are more efficient than or comparable to the nonparametric method
and are also comparable to parametric estimators under correctly specified distributions.  In addition, our proposed estimator for the optimal cutoff point $c$ has clear advantages over existing ones for all scenarios considered in the simulation. 

The rest of the paper is organized as follows. 
In Section \ref{sec3}, we propose the MELEs of $J$ and $c$ under a DRM
and study their asymptotic results. Confidence intervals of $J$ and $c$ are then constructed based on the asymptotic results. Simulation studies are presented in Section \ref{sec5} and 
a real data application is given in Section \ref{sec6}. We conclude with some discussion and additional remarks in Section \ref{sec7}. 
All  technical details are provided in the Appendix.

\section{MAIN RESULTS} \label{sec3}

\subsection{Point Estimation of $J$ and $c$ under the DRM}
\label{sec3.1}
Denote $\{x_{01},\ldots,x_{0n_0}\}$ and $\{x_{11},\ldots,x_{1n_1}\}$ as two independent random samples coming from the healthy and diseased populations, respectively. 
Let $f_0$ and $f_1$ be the probability density functions of $F_0$ and $F_1$, respectively. 
Following the definition of Youden index in \eqref{J}, the optimal cutoff point $c$ satisfies $f_0(c)=f_1(c)$, 
which together with \eqref{drm} implies that  
\begin{equation}\label{solution.c}
{\boldsymbol{\theta}}^T{\boldsymbol{Q}}(c)=0. 
\end{equation}
The above equation serves as the basis for estimating $c$. 

In the following, we focus on cases where the biomarker has a LLOD, denoted as $r$, and develop estimators for $(J,c)$. 
Analysis of data without a LLOD amounts to setting $r=-\infty$. 
Let $m_0$ and $m_1$ be the numbers of observations above the LLOD $r$ in the healthy and diseased groups, respectively. 
Let $\zeta_0=P(x_{01}\geq r)$ and $\zeta_1=P(x_{11}\geq r)$. 
Without loss of generality, we use $\{t_1,\ldots,t_m\}=\{x_{kj},k=0,1,~j=1,\ldots,m_k\}$ to denote the observations in the two samples which are above the LLOD, where $m=m_0+m_1$. 

We now discuss the maximum EL procedure for estimating the unknown parameters and functions. 
By the EL principle \citep{owen2001empirical} and under the DRM (\ref{drm}), 
the full likelihood can be written as 
\begin{eqnarray*}
\label{L_n}
L_n &=& \prod_{k=0}^1 \left[(1-\zeta_k)^{n_k - m_k} \prod_{j=1}^{m_k}dF_k(X_{kj})\right] \\
      &=& \prod_{k=0}^1 (1-\zeta_k)^{n_k - m_k}\prod_{i=1}^{m} p_i  \prod_{i=m_0+1}^m \exp\{\boldsymbol{\theta}^T \mathbf{Q}(t_i)\},
\end{eqnarray*} 
where $p_i=dF_0(t_i)$ for $i=1,\ldots,m$ and they satisfy the following constraints:
\begin{equation}
\label{constraint_censored}
p_i \geq 0,~ 0<\sum_{i = 1}^m p_i = \zeta_0 \leq1,~0<\sum_{i = 1}^m p_i \exp\{\boldsymbol{\theta}^T \mathbf{Q}(t_i)\} = \zeta_1 \leq1.
\end{equation}
The MELEs of $({\boldsymbol{\theta}},\zeta_0,\zeta_1,p_1,\ldots,p_m)$, denoted as $(\hat{\boldsymbol{\theta}},\hat\zeta_0,\hat\zeta_1,\hat p_{1},\ldots,\hat p_{m})$, 
are defined as the maximizers of $L_n$ subject to the constraints in Equation \eqref{constraint_censored}. 
It is shown by \cite{cai2018empirical} that 
\begin{equation}
\label{hat_zeta}
    \hat{\zeta}_k = m_k/n_k,~~ k = 0,1\,,
\end{equation}
and $\hat{\boldsymbol{\theta}}$ maximizes the following dual empirical log-likelihood function 
\begin{equation}
    \ell_{n}(\boldsymbol{\theta}) = \sum_{i = m_0+1}^m \{\boldsymbol{\theta}^T \mathbf{Q}(t_i)\} - \sum_{i = 1}^m\log\left[1+\rho \exp\{\boldsymbol{\theta}^T\mathbf{Q}(t_i)\}\right],
\end{equation}
where $\rho=n_1/n_0$. 
That is, 
 $\hat{\boldsymbol{\theta}}= \arg \max \limits_{\boldsymbol{\theta}}\ell_n(\boldsymbol{\theta})$. 
The MELEs of $p_i's$ are given by
\begin{equation}
    \label{pi_censored}
    \hat{p}_{i} = n_0^{-1}\left[1+\rho \exp\{\hat{\boldsymbol{\theta}}^T\mathbf{Q}(t_i)\}\right]^{-1},~~i = 1,\cdots,m.
\end{equation}
It follows that,  for any $x \geq r$, the MELEs of $F_0$ and $F_1$ are given by 
\begin{eqnarray*}
    \hat{F}_{0}(x) &=& (1 - \hat{\zeta}_0) + \frac{1}{n_0}\sum_{i=1}^{n}\frac{1}{1+\rho \exp\{\hat{\boldsymbol{\theta}}_r^T\mathbf{Q}(t_i)\}}I(r \leq t_i \leq x) \,,\\
    \hat{F}_{1}(x) &=& (1 - \hat{\zeta}_1) +\frac{1}{n_0}\sum_{i=1}^{n}\frac{ \exp\{\hat{\boldsymbol{\theta}}_r^T\mathbf{Q}(t_i)\} }{1+\rho \exp\{\hat{\boldsymbol{\theta}}_r^T\mathbf{Q}(t_i)\}}I(r \leq t_i \leq x) \,,
\end{eqnarray*}
where $I(\cdot)$ is the indicator function.

With the MELE $\hat{\boldsymbol{\theta}}$ and the equation \eqref{solution.c}, 
the MELE of the optimal cutoff point $c$, denoted as $\hat c$, can be obtained  as the solution to the equation
\begin{equation}
    \hat{\boldsymbol{\theta}}^T\mathbf{Q}(x) = 0.
    \label{drmpdf}
\end{equation} 
If multiple solutions exist for \eqref{drmpdf}  in $\left[\min t_i, \max  t_i\right]$, 
we choose the one that attains the maximum of $\hat{F}_{0}(x) - \hat{F}_{1}(x)$ as $\hat{c}$.
If a solution to \eqref{drmpdf} does not exist  in the range $\left[\min t_i, \max  t_i\right]$, 
we set $\hat c$ to be 
$$
\hat c=\arg\max_{x\in\{t_i:i=1,\ldots,m\}} \{ \hat{F}_{0}(x) - \hat{F}_{1}(x)\}. 
$$
The MELE $\hat{J}$ of $J$ is then given by 
$\hat{J} = \hat{F}_{0}(\hat{c}) - \hat{F}_{1}(\hat{c})$.

\subsection{Asymptotic Properties} 

In this section, we study the asymptotic properties of the MELEs $(\hat J,\hat c)$ described in Section \ref{sec3.1}.
We first introduce some further notation.  Let 
 $\boldsymbol{\theta}_0$ be the true value of $\boldsymbol{\theta}$
 and 
 $  \omega(x) = \exp \{\boldsymbol{\theta}_0^T\mathbf{Q}(x)\}$. 
For $t\geq r$, define
\begin{eqnarray*}
A_0(t)& =& \int_{r}^t \frac{\omega(x)}{1 + \rho\omega(x)} dF_0(x),\\
\A_1(t)&=&\int_{r}^t \frac{\omega(x)}{1 + \rho\omega(x)}\mathbf{q}(x) dF_0(x),\\
\A_2(t)&=& \int_{r}^t \frac{\omega(x)}{1 + \rho\omega(x)}\mathbf{q}(x)\mathbf{q}^T(x) dF_0(x).
\end{eqnarray*}
Further, let  
$ A_{0} = A_0(\infty)$, 
$\A_{1} = \A_1(\infty)$,
$ \A_2 = \A_2(\infty)$, 
and 
\begin{equation*}
    \A = \left ( \begin{array}{cc}
A_{0}& \A_{1}^T\\
\A_{1}&\A_{2}
\end{array} \right),~~\S=\frac{\rho}{1+\rho}\A,~~
\V = \S - \rho \left( \begin{array}{c}A_{0} \\ \A_{1} \end{array} \right) (A_{0}, \A_{1}^T).
\end{equation*} 
Define $\dot{\mathbf{q}}(x)=d{\mathbf{q}}(x)/dx$. 

\begin{theorem}
\label{thm1}
Let $(J_0,c_0)$ be the true value of $(J,c)$. 
Suppose the regularity conditions in the Appendix are satisfied and $c_0 > r$.  
As the total sample size $n=n_0+n_1$ goes to infinity, we have\\
(a) 
$\sqrt{n} (\hat{c} - c_0) \to {\rm N}(0,\sigma_{c}^2)$ in distribution, where 
    \begin{equation}
    \label{sigma_cr}
       \sigma_{c}^2 = \frac{\mathbf{Q}(c_0)^T \S^{-1}\V\S^{-1} \mathbf{Q}(c_0)}{\{\boldsymbol{\beta}_0^T\dot{\mathbf{q}}(c_0)\}^{2}}
    \end{equation}
  and $\boldsymbol{\beta}_0$ is the true value of $\boldsymbol{\beta}$;

\noindent
(b) 
$\sqrt{n} (\hat{J} - J_0) \to {\rm N} (0, \sigma_{J}^2)$ in distribution, 
   where
\begin{eqnarray}
\sigma_{J}^2 &=&  (\rho +1) \{F_0(c_0) - F_0^2(c_0)\}+ \frac{\rho+1}{\rho}\{F_1(c_0) - F_1^2(c_0)\} \nonumber \\
&& - \frac{(\rho +1)^3}{\rho}\left\{ A_{0}(c_0) - \left( \begin{array}{c}
A_{0}(c_0)\\ \A_{1}(c_0)\end{array} \right)^T \A^{-1} \left(\begin{array}{c}
A_{0}(c_0)\\ \A_{1}(c_0)\end{array}\right) \right\}.
\label{sigma_Jr}
\end{eqnarray}
\end{theorem}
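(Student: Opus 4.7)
The plan is to derive the joint asymptotic behavior of the DRM estimator $\hat{\boldsymbol{\theta}}$ and the empirical distribution functions $\hat{F}_0, \hat{F}_1$, and then reduce part (a) to a delta-method calculation and part (b) to an envelope argument that kills the first-order contribution of $\hat{c} - c_0$ and leaves only the linear expansion of $\hat{F}_0(c_0) - \hat{F}_1(c_0)$.

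I would begin by quoting (or re-deriving via the standard score expansion) the LLOD-adapted result of \citet{cai2018empirical}, namely $\sqrt{n}(\hat{\boldsymbol{\theta}} - \boldsymbol{\theta}_0) \to N(0, \S^{-1} \V \S^{-1})$. This follows from computing the first two derivatives of $\ell_n(\boldsymbol{\theta})$ at $\boldsymbol{\theta}_0$, verifying $-n^{-1} \partial^2 \ell_n(\boldsymbol{\theta}_0) / \partial \boldsymbol{\theta} \partial \boldsymbol{\theta}^T \to \S$ and $n^{-1} \mathrm{Var}\{\partial \ell_n(\boldsymbol{\theta}_0) / \partial \boldsymbol{\theta}\} \to \V$, and combining with \eqref{hat_zeta}. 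For part (a), note that $\boldsymbol{\theta}_0^T \mathbf{Q}(c_0) = 0$ by \eqref{solution.c} and the regularity condition $\boldsymbol{\beta}_0^T \dot{\mathbf{q}}(c_0) \neq 0$ gives non-degeneracy, so the implicit function theorem yields $\hat{c} \to c_0$ in probability and
\[
0 = \hat{\boldsymbol{\theta}}^T \mathbf{Q}(\hat{c}) = (\hat{\boldsymbol{\theta}} - \boldsymbol{\theta}_0)^T \mathbf{Q}(c_0) + \{\boldsymbol{\beta}_0^T \dot{\mathbf{q}}(c_0)\}(\hat{c} - c_0) + o_p(n^{-1/2}),
\]
using $\dot{\mathbf{Q}}(c_0) = (0, \dot{\mathbf{q}}(c_0)^T)^T$. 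Rearranging and applying the CLT from the previous step gives $\sqrt{n}(\hat{c} - c_0) \to N(0, \sigma_c^2)$ with the variance in \eqref{sigma_cr}.

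For part (b), I would decompose
\[
\hat{J} - J_0 = \{\hat{F}_0(\hat{c}) - \hat{F}_0(c_0) - \hat{F}_1(\hat{c}) + \hat{F}_1(c_0)\} + \{\hat{F}_0(c_0) - F_0(c_0)\} - \{\hat{F}_1(c_0) - F_1(c_0)\}.
\]
Because $c_0$ maximizes $F_0(x) - F_1(x)$, the envelope condition $f_0(c_0) = f_1(c_0)$ holds. Combining this with uniform consistency of $\hat{F}_k - F_k$ and the $O_p(n^{-1/2})$ rate from (a) shows the first bracket equals $(f_0(c_0) - f_1(c_0))(\hat{c} - c_0) + o_p(n^{-1/2}) = o_p(n^{-1/2})$. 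It remains to find the limit of $\sqrt{n}[\hat{F}_0(c_0) - F_0(c_0) - \hat{F}_1(c_0) + F_1(c_0)]$, obtained by substituting the Bahadur expansion of $\hat{\boldsymbol{\theta}}$ into the dual expressions for $\hat{F}_k$ and tracking the binomial variability of $\hat{\zeta}_k = m_k / n_k$. Collecting the independent contributions from the two samples and applying the CLT then produces $\sigma_J^2$, whose three terms in \eqref{sigma_Jr} correspond respectively to the marginal variability of $\hat{F}_0(c_0)$, of $\hat{F}_1(c_0)$, and the negative covariance correction (efficiency gain) arising from the DRM linkage through $\hat{\boldsymbol{\theta}}$.

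The chief technical burden is the last computation: propagating $\hat{\zeta}_k$ and the linearization of $\hat{\boldsymbol{\theta}}$ through the truncated sums defining $\hat{F}_k$ and verifying that the resulting variance collapses exactly to the compact matrix form in \eqref{sigma_Jr} involving $A_0(c_0)$, $\A_1(c_0)$, and $\A$. A secondary point is that one must show that with probability tending to one $\hat{c}$ is defined by the smooth root equation $\hat{\boldsymbol{\theta}}^T \mathbf{Q}(x) = 0$ rather than by the fallback argmax over $\{t_i\}$, so that the Taylor argument for (a) is justified on the event in question.
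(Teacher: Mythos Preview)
Your overall architecture matches the paper's exactly: Lemma~1 in the paper is precisely your ``secondary point'' that $\hat{c}$ eventually solves $\hat{\boldsymbol{\theta}}^T\mathbf{Q}(x)=0$, part~(a) is the same Taylor/delta-method step from the Cai--Chen CLT for $\hat{\boldsymbol{\theta}}$, and part~(b) is the same decomposition into $\{\hat F_0(c_0)-F_0(c_0)\}-\{\hat F_1(c_0)-F_1(c_0)\}$ plus a remainder that vanishes by the envelope condition. The paper also shortcuts your ``chief technical burden'' by simply citing Theorem~2 of \cite{cai2018empirical} for the joint limit of $(\hat F_0(c_0),\hat F_1(c_0))$ rather than redoing the Bahadur expansion; your plan to rederive it is fine but unnecessary.

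There is, however, one genuine gap in your argument for~(b). You claim the bracket $\hat H(\hat c)-\hat H(c_0)$ (with $\hat H=\hat F_0-\hat F_1$) is $o_p(n^{-1/2})$ by ``uniform consistency of $\hat F_k-F_k$ and the $O_p(n^{-1/2})$ rate from~(a)''. Uniform consistency only gives $\sup_x|\hat H(x)-H(x)|=O_p(n^{-1/2})$, so the difference $\{\hat H(\hat c)-H(\hat c)\}-\{\hat H(c_0)-H(c_0)\}$ is a priori only $O_p(n^{-1/2})$, not $o_p(n^{-1/2})$; and since $\hat H$ is a step function you cannot Taylor-expand it directly. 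What you actually need is a \emph{modulus-of-continuity} (stochastic equicontinuity) bound of the form
\[
\sup_{|x-c_0|\le b n^{-1/2}}\bigl|\{\hat F_k(x)-\hat F_k(c_0)\}-\{F_k(x)-F_k(c_0)\}\bigr|=o_p(n^{-1/2}),
\]
which the paper obtains from Lemma~4 of \cite{cai2018empirical} (the rate there is $O_p(n^{-3/4}(\log n)^{1/2})$). Once that is in place, your envelope argument for the smooth piece $H(\hat c)-H(c_0)=O_p((\hat c-c_0)^2)=O_p(n^{-1})$ goes through as written. So the fix is small: replace ``uniform consistency'' with the appropriate oscillation bound and cite or prove it.
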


We provide some comments on the results of Theorem 1.
Let $(\hat J_E, \hat c_E)$ and $(\hat J_K, \hat c_K)$ be  the ECDF-based and kernel-based estimators  
 of $(J,c)$, respectively. 
First, the estimators $\hat J$ and $\hat c$ 
both reach the convergence rates of the parametric likelihood based estimators. 
When there is no LLOD or $r=-\infty$, 
the convergence rate of $\hat c$ is faster than $\hat c_E$ and $\hat c_K$.
Second, when there is no LLOD or $r=-\infty$, 
\cite{hsieh1996nonparametric} showed that
\begin{equation*}
nE\{(\hat J_E-J_0)^2\} = \sigma^2_N+O(n^{-1/3}),~nE\{(\hat J_K-J_0)^2\}  = \sigma^2_N - \gamma n^{-v } \{1+o(1)\} 
\end{equation*}
for some $\gamma>0$,
where 
$$
\sigma^2_{N}=(\rho +1) \{F_0(c_0) - F_0^2(c_0)\}
+ \frac{\rho+1}{\rho}\{F_1(c_0) - F_1^2(c_0)\}.
$$
Here  the two bandwidths for the kernel method have the order $n^{-v}$ for some $0<v<1/3$.
According to Theorem 1 in \cite{qin1997goodness}, $\sigma^2_N-\sigma^2_{J}\geq0$.
Hence, when $n$ is large, the asymptotic mean square error of $\hat J$ is smaller than those of $\hat J_E$ and $\hat J_K$.

\subsection{Confidence Intervals on $J$ and $c$ under the DRM}

Replacing $(\boldsymbol{\theta}_0,J_0,c_0,F_0)$ in $\sigma^2_J$ and $\sigma^2_c$ by 
their respective estimators $(\hat {\boldsymbol{\theta}},\hat J,\hat c,\hat F_0)$, 
we obtain the estimators $(\hat \sigma^2_J, \hat \sigma^2_c)$  for $(\sigma^2_J, \sigma^2_c)$. 
It can be shown that $\hat \sigma^2_J$ and $\hat \sigma^2_c$ are both consistent. 

\begin{theorem}
\label{thm2}
Under the same conditions of Theorem 1, we have 
$$
\hat\sigma^2_J \to \sigma^2_J \;\;\; {\rm and} \;\;\; \hat\sigma^2_c \to \sigma^2_c
$$
in probability as $n\to\infty$. 
\end{theorem}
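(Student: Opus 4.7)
The plan is to prove Theorem 2 by a plug-in argument followed by the continuous mapping theorem. The expressions for $\sigma^2_J$ in \eqref{sigma_Jr} and $\sigma^2_c$ in \eqref{sigma_cr} are explicit continuous functions of the finite list of ingredients $\boldsymbol{\theta}_0$, $c_0$, $F_0(c_0)$, $F_1(c_0)$, $\mathbf{Q}(c_0)$, $\dot{\mathbf{q}}(c_0)$, $A_0(c_0)$, $\A_1(c_0)$, and the matrices $\A$, $\S$, $\V$ (built from $A_0$, $\A_1$, $\A_2$ evaluated at $\infty$ together with $\rho$). It therefore suffices to show that each plug-in estimate converges in probability to its target, together with the nondegeneracy facts that $\S$ is invertible and $\boldsymbol{\beta}_0^T\dot{\mathbf{q}}(c_0)\neq 0$, both ensured by the regularity conditions in the Appendix.

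First I would collect the basic consistency facts. Theorem 1(a) yields $\sqrt{n}$-consistency and hence $\hat c\to c_0$ in probability. The consistency $\hat{\boldsymbol{\theta}}\to\boldsymbol{\theta}_0$ is the standard maximum-empirical-likelihood result for DRMs used in the derivation underlying Theorem 1. The uniform convergence $\sup_{x\geq r}|\hat F_0(x)-F_0(x)|\to 0$ in probability then follows from $\hat\zeta_0\to\zeta_0$, $\hat{\boldsymbol{\theta}}\to\boldsymbol{\theta}_0$, and a Glivenko--Cantelli-type bound on the pooled sample (the DRM analogue of the classical ECDF argument). Combined with the continuity of $F_0$, $F_1$, $\mathbf{Q}(\cdot)$, and $\dot{\mathbf{q}}(\cdot)$ at $c_0$, this gives $\hat F_0(\hat c)\to F_0(c_0)$, $\hat F_1(\hat c)\to F_1(c_0)$, $\mathbf{Q}(\hat c)\to \mathbf{Q}(c_0)$, and $\hat{\boldsymbol{\beta}}^T\dot{\mathbf{q}}(\hat c)\to \boldsymbol{\beta}_0^T\dot{\mathbf{q}}(c_0)$ in probability.

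Next I would handle the integral ingredients. Let $\hat A_0(t)$, $\hat{\A}_1(t)$, $\hat{\A}_2(t)$ denote the plug-in analogues obtained by substituting $\hat\omega(x)=\exp\{\hat{\boldsymbol{\theta}}^T\mathbf{Q}(x)\}$ for $\omega$ and the discrete measure with point masses $\hat p_i$ at $t_i$ for $dF_0$. For any $t$, the add-and-subtract decomposition
\begin{equation*}
\hat A_0(t)-A_0(t)=\int_r^t\left\{\frac{\hat\omega(x)}{1+\rho\hat\omega(x)}-\frac{\omega(x)}{1+\rho\omega(x)}\right\}d\hat F_0(x)+\int_r^t\frac{\omega(x)}{1+\rho\omega(x)}\,d(\hat F_0-F_0)(x)
\end{equation*}
controls the first piece via a mean-value expansion in $\boldsymbol{\theta}$ combined with $\hat{\boldsymbol{\theta}}\to\boldsymbol{\theta}_0$ and the boundedness of $\omega/(1+\rho\omega)$ in $(0,1/\rho)$, while the second piece is handled by the uniform convergence of $\hat F_0$. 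The same route applies to $\hat{\A}_1,\hat{\A}_2$ under the moment conditions on $\mathbf{q}$ built into the regularity assumptions, yielding uniform convergence $\sup_{t\geq r}|\hat A_0(t)-A_0(t)|\to 0$, and analogously for $\hat{\A}_1$ and $\hat{\A}_2$, in probability.

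The main obstacle is combining a random integrand, a random integrator, and the random upper limit $\hat c$ when forming $\hat A_0(\hat c)$ and $\hat{\A}_1(\hat c)$. To handle this cleanly I would split
\begin{equation*}
\hat A_0(\hat c)-A_0(c_0)=\{\hat A_0(\hat c)-A_0(\hat c)\}+\{A_0(\hat c)-A_0(c_0)\},
\end{equation*}
bound the first bracket by the uniform convergence established above, and the second by the continuity of $A_0(\cdot)$ at $c_0$ combined with $\hat c\to c_0$; the same argument delivers $\hat{\A}_1(\hat c)\to \A_1(c_0)$. The matrices $\hat\A,\hat\S,\hat\V$ are recovered from the $t=\infty$ limits of $\hat A_0,\hat{\A}_1,\hat{\A}_2$ together with $\hat\rho=n_1/n_0\to\rho$. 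Assembling all pieces and invoking the continuous mapping theorem, with the non-vanishing of $\S$ and $\boldsymbol{\beta}_0^T\dot{\mathbf{q}}(c_0)$ protecting the denominators, gives $\hat\sigma^2_c\to\sigma^2_c$ and $\hat\sigma^2_J\to\sigma^2_J$ in probability.
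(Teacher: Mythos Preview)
The paper does not actually supply a proof of Theorem~2; it is stated in Section~2.3 but the Appendix contains only the regularity conditions, Lemma~1, and the proof of Theorem~1. Your plug-in argument followed by the continuous mapping theorem is the natural route and is correct in outline. The building blocks you invoke---consistency of $\hat{\boldsymbol{\theta}}$ and the $O_p(n^{-1/2})$ uniform bound on $\hat F_0-F_0$---are exactly the facts the paper imports from Cai and Chen (2018) in Section~A.2, and your nondegeneracy checks go through: Condition~C4 makes $\A$ (hence $\S$) positive definite, and Condition~C2 gives $F_0''(c_0)-F_1''(c_0)=-\boldsymbol{\beta}_0^T\dot{\mathbf{q}}(c_0)f_0(c_0)<0$, so $\boldsymbol{\beta}_0^T\dot{\mathbf{q}}(c_0)\neq 0$.

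One point deserves a little more care. Your claim of \emph{uniform} convergence $\sup_{t\geq r}\|\hat{\A}_1(t)-\A_1(t)\|\to 0$ (and likewise for $\hat{\A}_2$) involves the possibly unbounded factor $\mathbf{q}(x)$, so it does not follow from the Glivenko--Cantelli bound on $\hat F_0$ alone; you would need a truncation argument that uses the second-moment condition in~C4. A cleaner fix is to drop the uniformity claim altogether, since Theorem~2 only requires convergence at the two evaluation points $t=\hat c$ and $t=\infty$. At $t=\infty$ the result is a weak law of large numbers for a sample average over the pooled observations with the consistent weights $\hat p_i$; at $t=\hat c$ your two-bracket split $\{\hat A_0(\hat c)-A_0(\hat c)\}+\{A_0(\hat c)-A_0(c_0)\}$ already reduces matters to this same pointwise convergence plus continuity of $A_0(\cdot)$ and $\A_1(\cdot)$ at $c_0$. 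With that adjustment the proof is complete.
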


Because of the asymptotic normality of $\hat c$ presented in Theorem 1 and the consistency of $\hat \sigma^2_c$,
the quantity $\sqrt{n}(\hat c-c_0)/\hat\sigma_c$ is asymptotically pivotal, which leads to 
the following Wald-type CI for $c$ at level $1-a$: 
$$
\mathcal{I}_{c}=\left[\hat c-z_{1-a/2}\hat\sigma_c/\sqrt{n}, \hat c+z_{1-a/2}\hat\sigma_c/\sqrt{n}\right],
$$ 
where $z_{1-a/2}$ is the $100(1-a/2)$th quantile of the standard normal distribution. 

We can similarly construct a Wald-type CI for $J$. 
Our simulation experience indicates that a logit transformation on $\hat J$ leads to 
a CI for $J$ with better coverage accuracy, especially when $J_0$ is close to 0 or 1. 
More specifically,  using the results in Theorems 1 and 2, it can be shown that 
$$
\sqrt{n}\{\mbox{logit}(\hat J)-\mbox{logit}(J_0)\}\to N\left(0,\frac{\sigma_J^2}{J_0^2(1-J_0)^2}\right)
$$
in distribution as $n\to\infty$, where $\mbox{logit}(x)=\log\{x/(1-x)\}$ for $0<x<1$. 
Hence 
$\sqrt{n}\hat J(1-\hat J)\{\mbox{logit}(\hat J)-\mbox{logit}(J_0)\}/\hat\sigma_J$
is also asymptotically pivotal. 
This suggests the following CI for $J$: 
$$
\mathcal{I}_{J}=\left[
\mbox{expit}\left\{\mbox{logit}(\hat J)- \frac{z_{1-a/2}\hat\sigma_J}{\sqrt{n}\hat J(1-\hat J)}\right\},
\mbox{expit}\left\{\mbox{logit}(\hat J)+\frac{z_{1-a/2}\hat\sigma_J}{\sqrt{n}\hat J(1-\hat J)}\right\}\right],
$$ 
where $\mbox{expit}(x)=\exp(x)/\{1+\exp(x)\}$.

\section{SIMULATION STUDIES}
\label{sec5}

\subsection{Candidate methods}

In this section, we report results from simulation studies to compare the proposed point estimators  and CIs  of $(J,c)$ 
with  the following candidate methods.

\begin{itemize}
    \item[--] The Box-Cox method in \cite{bantis2019construction}, 
    where    the corresponding point estimators and CIs of $(J,c)$
    are denoted as $(\hat J_B,\hat c_B)$ and 
$(\mathcal{I}_{JB},\mathcal{I}_{cB})$, respectively.     
    \item[--] The ROC-GLM method in \cite{ruopp2008youden}, 
    where the corresponding point estimators and CIs of $(J,c)$
    are denoted as  $(\hat J_G,\hat c_G)$ and 
$(\mathcal{I}_{JG},\mathcal{I}_{cG})$, respectively.     
\item[--] The ECDF-based method, where 
the corresponding point estimators and CIs of $(J,c)$
    are denoted as $(\hat J_E,\hat c_E)$ and 
$(\mathcal{I}_{JE},\mathcal{I}_{cE})$, respectively.
 \item[--] The kernel-based method in \cite{fluss2005estimation}, 
 where the corresponding point estimators and CIs of $(J,c)$
    are denoted as $(\hat J_K,\hat c_K)$ and 
$(\mathcal{I}_{JK},\mathcal{I}_{cK})$, respectively. 
    \item[--] The HCNS method in \cite{bantis2019construction}, 
    where the corresponding point estimators and CIs of $(J,c)$
    are denoted as $(\hat J_H,\hat c_H)$ and 
$(\mathcal{I}_{JH},\mathcal{I}_{cH})$, respectively. 
     \end{itemize}
For all the above candidate methods, except for $\mathcal{I}_{JB}$, which  is obtained via the delta method, 
the CIs are constructed using the nonparametric bootstrap percentile method. 

When there is no LLOD, we compare our proposed method and all the candidate methods listed above. 
When there is a fixed  and finite LLOD,  to the best of our knowledge, 
the kernel-based method and the HCNS method have not been explored in the literature, 
and hence we do not include these two methods in our comparisons. 

\subsection{Simulation setup}
We conduct simulation studies under the following two distributional settings from \cite{fluss2005estimation}: 
\begin{itemize}
    \item [] (1) $f_0  \sim {\rm Gamma}(2,0.5)~{\rm and}~f_1  \sim {\rm Gamma}(2,\eta);$
    \item [] (2) $f_0  \sim {\rm LN}(2.5,0.09)~{\rm and} ~f_1  \sim {\rm LN}(\mu,0.25).$
\end{itemize}
Here ${\rm Gamma}(\kappa,\eta)$ denotes the  gamma distribution with shape parameter $\kappa$ and rate parameter $\eta$ and ${\rm LN}(\mu, \sigma^2)$ denotes the lognormal distribution with mean $\mu$ and variance $\sigma^2$, both with respect to the log scale.
The proposed estimators are calculated under the correctly specified ${\bf q}(x)$. 
For the gamma distributional setting, ${\bf q}(x)=x$, and for lognormal distribution setting,  ${\bf q}(x)=(\log x,\log ^2x)^T$. 

For each distributional setting, we choose four values of $\eta$ or $\mu$  such that the corresponding 
Youden indexes equal  0.2, 0.4, 0.6, and 0.8. 
The details are in Table 1 of the Supplementary Material. 
For the LLOD, we consider three values: $-\infty$, 15\% quantile of $F_0$, and 30\% quantile of $F_0$. 
The exact values of LLOD for  the latter two cases are in   Table 2 of the Supplementary Material.
Note that when the LLOD equals $-\infty$, there is no LLOD. 
For each simulation scenario, we consider five sample size combinations: $(n_0,n_1) = (50, 50)$,  $(100, 100)$, $(200, 200)$, $(150, 50)$, and $(50, 150)$, and results are based on 1,000 repeated simulation runs. 

The simulation results from different simulation scenarios demonstrate similar patterns.
To save space, we only report the simulation results under the gamma distributional setting with no LLOD and with the LLOD equal to the 15\% quantile of $F_0$. 
Other simulation results are provided in the Supplementary Material. 

\subsection{Comparison for point estimators} 

We first examine the point estimators of $(J,c)$. 
The performance of an point estimator is evaluated through the relative bias (RB) in percentage 
and mean squared error (MSE) computed as 
\begin{equation*}
    RB = \frac{1}{B}\sum_{b=1}^{B}\frac{a^{(b)} - a_0}{a_0} \times 100,~~~ MSE = \frac{1}{B}\sum_{b=1}^{B}(a^{(b)} - a_0)^2,
\end{equation*}
where $a_0$ is the true value of the interested quantity, $a^{(b)}$ is the estimator of the quantity computed from the $b$th simulation run, and $B=1,000$ is the number of simulation runs. 
The simulation results are presented in Tables~\ref{point.J.no}--\ref{point.c.lod}.

\begin{table}[!htt]
\centering
\small
\caption{RB (\%) and MSE ($\times 100$) for point estimators of  $J$ when there is no LLOD}
\label{point.J.no}
\begin{tabular}{cccccccccccc}
  \hline
 &$(n_0, n_1)$& \multicolumn{2}{c}{$(50, 50)$} & \multicolumn{2}{c}{$(100, 100)$} & \multicolumn{2}{c}{$(200, 200)$}& \multicolumn{2}{c}{$(50, 150)$} & \multicolumn{2}{c}{$(150, 50)$}\\ 
  \hline
  $J$& & RB & MSE & RB & MSE& RB & MSE& RB & MSE& RB & MSE\\ \hline
  \multirow{6}{*}{$0.2$}& $\hat{J}$  & 4.26 & 0.61 & 1.74 & 0.31 & 0.73 & 0.14 & 3.35 & 0.39 & 2.01 & 0.42 \\ 
  &$\hat{J}_{B}$  & 8.53 & 0.62 & 4.17 & 0.32 & 2.48 & 0.15 & 6.61 & 0.41 & 5.40 & 0.43 \\ 
  &$\hat{J}_{G}$ & 10.08 & 0.63 & 5.03 & 0.33 & 2.84 & 0.15 & 8.16 & 0.41 & 6.04 & 0.45 \\ 
  &$\hat{J}_{E}$ & 40.20 & 1.29 & 26.63 & 0.64 & 17.72 & 0.30 & 33.05 & 0.86 & 32.69 & 0.89 \\ 
  &$\hat{J}_{K}$ & 9.45 & 0.66 & 5.28 & 0.35 & 3.07 & 0.17 & 6.12 & 0.45 & 8.03 & 0.47 \\ 
  &$\hat{J}_{H}$ & 17.27 & 0.78 & 7.96 & 0.37 & 5.39 & 0.22 & 12.91 & 0.49 & 9.72 & 0.55 \\ \hline
  \multirow{6}{*}{$0.4$}& $\hat{J}$  & 2.43 & 0.57 & 0.95 & 0.29 & 0.41 & 0.13 & 1.62 & 0.35 & 1.18 & 0.40 \\ 
  &$\hat{J}_{B}$  & 4.47 & 0.58 & 2.32 & 0.29 & 1.58 & 0.14 & 3.37 & 0.36 & 2.81 & 0.40 \\ 
  &$\hat{J}_{G}$ & 3.38 & 0.56 & 1.63 & 0.29 & 1.07 & 0.14 & 2.40 & 0.34 & 2.37 & 0.42 \\ 
  &$\hat{J}_{E}$ & 16.42 & 1.06 & 10.62 & 0.53 & 6.78 & 0.24 & 13.36 & 0.68 & 12.57 & 0.74 \\ 
  &$\hat{J}_{K}$ & 2.70 & 0.57 & 1.08 & 0.31 & 0.49 & 0.15 & 1.57 & 0.38 & 1.86 & 0.41 \\ 
  &$\hat{J}_{H}$ & 4.68 & 0.68 & 1.33 & 0.33 & 0.96 & 0.18 & 3.15 & 0.40 & 1.71 & 0.52 \\ \hline
  \multirow{6}{*}{$0.6$}& $\hat{J}$  & 1.61 & 0.45 & 0.56 & 0.24 & 0.27 & 0.11 & 0.93 & 0.26 & 0.81 & 0.34 \\ 
  &$\hat{J}_{B}$  & 2.95 & 0.44 & 1.58 & 0.22 & 1.13 & 0.11 & 2.13 & 0.26 & 1.87 & 0.32 \\ 
  &$\hat{J}_{G}$ & 1.14 & 0.42 & 0.37 & 0.23 & 0.30 & 0.11 & 0.42 & 0.25 & 1.16 & 0.35 \\ 
  &$\hat{J}_{E}$ & 8.89 & 0.76 & 5.38 & 0.37 & 3.71 & 0.19 & 7.14 & 0.48 & 6.73 & 0.56 \\ 
  &$\hat{J}_{K}$ & 0.04 & 0.39 & -0.63 & 0.22 & -0.61 & 0.11 & -0.07 & 0.27 & -0.53 & 0.29 \\ 
  &$\hat{J}_{H}$ & 1.78 & 0.61 & 0.16 & 0.28 & 0.05 & 0.15 & 0.96 & 0.31 & 0.39 & 0.47 \\ \hline
  \multirow{6}{*}{$0.8$}& $\hat{J}$  & 1.06 & 0.26 & 0.38 & 0.14 & 0.16 & 0.06 & 0.55 & 0.13 & 0.63 & 0.21 \\ 
  &$\hat{J}_{B}$  & 1.56 & 0.22 & 0.81 & 0.11 & 0.60 & 0.05 & 1.07 & 0.12 & 0.96 & 0.17 \\ 
  &$\hat{J}_{G}$ & -0.26 & 0.26 & -0.36 & 0.14 & -0.32 & 0.07 & -0.63 & 0.15 & 0.25 & 0.22 \\ 
  &$\hat{J}_{E}$ & 4.51 & 0.39 & 2.86 & 0.20 & 1.94 & 0.1 & 3.73 & 0.24 & 3.53 & 0.30 \\ 
  &$\hat{J}_{K}$ & -2.38 & 0.24 & -2.40 & 0.15 & -2.00 & 0.08 & -1.74 & 0.14 & -2.88 & 0.21 \\ 
  &$\hat{J}_{H}$ & 1.55 & 0.40 & 0.71 & 0.19 & 0.75 & 0.11 & 1.01 & 0.19 & 0.82 & 0.32 \\\hline
\end{tabular}
\end{table}

\begin{table}[!htt]
\centering
\small
\caption{RB (\%) and MSE ($\times 100$) for point estimators of  $c$ when there is no LLOD}
\label{point.c.no}
\begin{tabular}{cccccccccccc}
  \hline
 &$(n_0, n_1)$& \multicolumn{2}{c}{$(50, 50)$} & \multicolumn{2}{c}{$(100, 100)$} & \multicolumn{2}{c}{$(200, 200)$}& \multicolumn{2}{c}{$(50, 150)$} & \multicolumn{2}{c}{$(150, 50)$}\\ 
  \hline
  $J$& & RB & MSE & RB & MSE& RB & MSE& RB & MSE& RB & MSE\\ \hline
  \multirow{6}{*}{$0.2$}& $\hat{c}$ & -0.39 & 12.22 & -0.15 & 6.09 & -0.03 & 2.78 & -0.40 & 8.56 & -0.02 & 7.68 \\ 
  &$\hat{c}_{B}$ & -3.03 & 81.16 & -1.97 & 41.56 & -2.33 & 21.17 & -1.83 & 56.07 & -2.85 & 56.79 \\ 
   &$\hat{c}_{G}$ & -6.33 & 108.4 & -2.55 & 56.42 & -2.20 & 29.56 & -4.64 & 70.96 & -1.79 & 73.30 \\ 
  &$\hat{c}_{E}$  & -0.92 & 249.25 & 0.49 & 170.90 & -0.76 & 116.8 & -0.78 & 201.93 & 1.89 & 223.79 \\ 
  &$\hat{c}_{K}$ & 16.99 & 405.52 & 10.39 & 164.68 & 5.99 & 91.84 & 13.63 & 287.86 & 13.00 & 254.68 \\ 
  &$\hat{c}_{H}$ & 1.75 & 237.90 & 3.16 & 171.47 & 2.71 & 110.91 & -1.09 & 179.73 & 7.6 & 243.96 \\ \hline
 \multirow{6}{*}{$0.4$}& $\hat{c}$ & -0.32 & 18.88 & -0.13 & 9.35 & 0.00 & 4.27 & -0.42 & 13.68 & 0.02 & 11.05 \\ 
 &$\hat{c}_{B}$ & -3.25 & 42.84 & -2.37 & 21.78 & -2.22 & 11.46 & -2.58 & 28.62 & -2.53 & 26.45 \\ 
  &$\hat{c}_{G}$ & -5.73 & 80.28 & -3.26 & 40.06 & -2.15 & 19.16 & -4.16 & 46.09 & -1.75 & 37.58 \\ 
  &$\hat{c}_{E}$  & -2.89 & 160.26 & -0.23 & 126.83 & -0.45 & 75.26 & -2.66 & 150.21 & 1.51 & 156.93 \\ 
  &$\hat{c}_{K}$ & 7.67 & 109.91 & 6.77 & 74.57 & 4.68 & 39.14 & 6.38 & 96.15 & 7.36 & 77.39 \\ 
  &$\hat{c}_{H}$ & 0.40 & 186.56 & 1.30 & 110.77 & 1.12 & 68.91 & -0.36 & 136.04 & 1.29 & 147.84 \\ \hline
  \multirow{6}{*}{$0.6$}& $\hat{c}$ & -0.35 & 32.29 & -0.18 & 15.91 & 0.01 & 7.38 & -0.53 & 23.77 & 0.06 & 18.20 \\ 
  &$\hat{c}_{B}$ & -2.75 & 43.66 & -2.04 & 21.86 & -1.76 & 10.88 & -2.45 & 30.09 & -1.89 & 23.40 \\ 
   &$\hat{c}_{G}$ & -5.87 & 108.83 & -3.71 & 54.12 & -2.30 & 26.28 & -3.94 & 52.16 & -2.82 & 45.86 \\ 
  &$\hat{c}_{E}$  & -2.29 & 161.37 & -0.67 & 118.08 & 0.16 & 74.77 & -2.40 & 147.73 & 1.22 & 146.35 \\ 
  &$\hat{c}_{K}$ & 7.57 & 119.52 & 6.43 & 74.85 & 5.08 & 42.21 & 5.93 & 99.47 & 7.62 & 77.68 \\ 
  &$\hat{c}_{H}$ & 0.40 & 174.95 & 0.78 & 82.95 & 0.60 & 42.69 & 0.16 & 100.73 & 0.25 & 127.29 \\ \hline
  \multirow{6}{*}{$0.8$}& $\hat{c}$ & -0.41 & 71.84 & -0.31 & 36.11 & 0.02 & 17.53 & -0.80 & 52.89 & 0.20 & 41.00 \\ 
  &$\hat{c}_{B}$ & -1.53 & 60.09 & -1.11 & 30.31 & -0.78 & 13.91 & -1.71 & 46.46 & -0.66 & 31.67 \\ 
   &$\hat{c}_{G}$ & -7.14 & 237.07 & -4.14 & 111.72 & -2.58 & 56.25 & -3.34 & 99.57 & -4.87 & 98.48 \\ 
  &$\hat{c}_{E}$  & -3.17 & 236.85 & -1.72 & 159.92 & -1.24 & 107.54 & -4.03 & 195.30 & 1.09 & 197.79 \\ 
  &$\hat{c}_{K}$ & 7.09 & 184.22 & 6.21 & 113.61 & 5.07 & 65.86 & 5.56 & 152.15 & 7.02 & 107.26 \\ 
  &$\hat{c}_{H}$ & -2.30 & 169.65 & -2.76 & 104.87 & -2.90 & 56.83 & -3.58 & 116.50 & -1.74 & 146.35 \\ \hline
\end{tabular}
\end{table}

\begin{table}[!htt]
\centering
\small
\caption{RB (\%) and MSE ($\times 100$) for point estimators of $J$ when the LLOD equals 15\% quantile of $F_0$} \label{point.J.lod}
\begin{tabular}{cccccccccccc}
  \hline
 &$(n_0, n_1)$& \multicolumn{2}{c}{$(50, 50)$} & \multicolumn{2}{c}{$(100, 100)$} & \multicolumn{2}{c}{$(200, 200)$}& \multicolumn{2}{c}{$(50, 150)$} & \multicolumn{2}{c}{$(150, 50)$}\\ 
  \hline
  $J$& & RB & MSE & RB & MSE& RB & MSE& RB & MSE& RB & MSE\\ \hline
  \multirow{4}{*}{$0.2$}& $\hat{J}$ & 5.92 & 0.61 & 2.49 & 0.31 & 1.11 & 0.14 & 4.62 & 0.40 & 2.85 & 0.42 \\ 
  &$\hat{J}_{B}$ & 9.59 & 0.64 & 4.75 & 0.32 & 2.85 & 0.15 & 7.38 & 0.41 & 5.85 & 0.43 \\ 
  &$\hat{J}_{G}$ & 7.58 & 0.69 & 2.67 & 0.36 & -0.34 & 0.17 & 5.99 & 0.46 & 3.50 & 0.50 \\ 
  &$\hat{J}_{E}$ & 40.17 & 1.29 & 26.63 & 0.64 & 17.72 & 0.30 & 33.04 & 0.86 & 32.69 & 0.89 \\ \hline
  \multirow{4}{*}{$0.4$}& $\hat{J}$ & 2.69 & 0.57 & 1.10 & 0.30 & 0.51 & 0.13 & 1.88 & 0.36 & 1.31 & 0.40 \\ 
  &$\hat{J}_{B}$ & 4.92 & 0.59 & 2.59 & 0.29 & 1.75 & 0.14 & 3.59 & 0.36 & 3.12 & 0.41 \\ 
  &$\hat{J}_{G}$ & 1.50 & 0.61 & -0.21 & 0.32 & -1.20 & 0.15 & 0.68 & 0.39 & -0.01 & 0.46 \\ 
  &$\hat{J}_{E}$ & 16.42 & 1.06 & 10.62 & 0.53 & 6.78 & 0.24 & 13.36 & 0.68 & 12.57 & 0.74 \\ \hline
  \multirow{4}{*}{$0.6$}& $\hat{J}$ & 1.74 & 0.46 & 0.6 & 0.24 & 0.29 & 0.11 & 1.04 & 0.27 & 0.85 & 0.34 \\ 
  &$\hat{J}_{B}$ & 3.17 & 0.44 & 1.70 & 0.22 & 1.19 & 0.11 & 2.16 & 0.27 & 2.09 & 0.32 \\ 
  &$\hat{J}_{G}$ & 0.11 & 0.47 & -0.64 & 0.26 & -1.04 & 0.12 & -0.56 & 0.29 & -0.35 & 0.38 \\ 
  &$\hat{J}_{E}$ & 8.89 & 0.76 & 5.38 & 0.37 & 3.71 & 0.19 & 7.14 & 0.48 & 6.73 & 0.56 \\ \hline
  \multirow{4}{*}{$0.8$}& $\hat{J}$ & 1.10 & 0.26 & 0.40 & 0.14 & 0.16 & 0.06 & 0.58 & 0.14 & 0.63 & 0.21 \\ 
  &$\hat{J}_{B}$ & 1.58 & 0.22 & 0.80 & 0.11 & 0.56 & 0.05 & 1.01 & 0.12 & 1.02 & 0.17 \\ 
  &$\hat{J}_{G}$ & -0.60 & 0.30 & -0.65 & 0.16 & -0.80 & 0.08 & -0.94 & 0.17 & -0.39 & 0.24 \\ 
  &$\hat{J}_{E}$ & 4.51 & 0.39 & 2.86 & 0.20 & 1.94 & 0.10 & 3.73 & 0.24 & 3.53 & 0.30 \\ \hline
\end{tabular}
\end{table}

\begin{table}[!htt]
\centering
\small
\caption{RB (\%) and MSE ($\times100$) for point estimators of $c$ when the LLOD equals 15\% quantile of $F_0$}
\label{point.c.lod}
\begin{tabular}{cccccccccccc}
  \hline
 &$(n_0, n_1)$& \multicolumn{2}{c}{$(50, 50)$} & \multicolumn{2}{c}{$(100, 100)$} & \multicolumn{2}{c}{$(200, 200)$}& \multicolumn{2}{c}{$(50, 150)$} & \multicolumn{2}{c}{$(150, 50)$}\\ 
  \hline
  $J$& & RB & MSE & RB & MSE& RB & MSE& RB & MSE& RB & MSE\\ \hline
  \multirow{4}{*}{$0.2$}& $\hat{c}$ & -1.58 & 60.04 & -0.86 & 24.25 & -0.44 & 10.84 & -1.53 & 38.67 & -0.60 & 66.67 \\ 
  &$\hat{c}_{B}$ & 1.04 & 148.39 & 0.49 & 51.29 & -0.30 & 24.87 & 0.29 & 66.35 & -0.07 & 67.79 \\ 
  &$\hat{c}_{G}$ & 6.30 & 168.58 & 8.65 & 81.29 & 8.51 & 48.7 & 6.72 & 90.4 & 9.00 & 102.15 \\ 
  &$\hat{c}_{E}$ & -0.65 & 257.51 & 0.49 & 170.8 & -0.76 & 116.8 & -0.79 & 202.38 & 1.89 & 223.79 \\ \hline
  \multirow{4}{*}{$0.4$}& $\hat{c}$ & -0.59 & 23.70 & -0.24 & 11.48 & -0.04 & 5.58 & -0.67 & 15.51 & -0.15 & 15.28 \\ 
  &$\hat{c}_{B}$ & -0.82 & 47.68 & -0.54 & 24.65 & -0.57 & 12.19 & -1.06 & 31.41 & -0.33 & 29.38 \\ 
  &$\hat{c}_{G}$ & 2.35 & 75.60 & 5.21 & 47.69 & 5.98 & 30.66 & 4.04 & 51.81 & 6.34 & 54.94 \\ 
  &$\hat{c}_{E}$ & -2.89 & 160.26 & -0.23 & 126.83 & -0.45 & 75.26 & -2.66 & 150.21 & 1.51 & 156.93 \\ \hline
 \multirow{4}{*}{$0.6$}& $\hat{c}$ & -0.39 & 32.90 & -0.22 & 16.35 & -0.01 & 7.71 & -0.63 & 23.81 & 0.01 & 19.68 \\ 
  &$\hat{c}_{B}$ & -1.05 & 48.30 & -0.81 & 23.92 & -0.64 & 11.46 & -1.48 & 31.25 & -0.27 & 26.75 \\ 
  &$\hat{c}_{G}$ & 0.18 & 91.92 & 2.33 & 51.66 & 3.77 & 29.23 & 2.04 & 52.36 & 3.34 & 50.15 \\ 
  &$\hat{c}_{E}$ & -2.29 & 161.37 & -0.67 & 118.08 & 0.16 & 74.77 & -2.40 & 147.73 & 1.22 & 146.35 \\ \hline
  \multirow{4}{*}{$0.8$}& $\hat{c}$ & -0.45 & 72.29 & -0.34 & 36.11 & 0.02 & 17.60 & -0.87 & 53.45 & 0.17 & 41.41 \\ 
  &$\hat{c}_{B}$ & -0.65 & 66.17 & -0.46 & 32.77 & -0.17 & 15.11 & -1.24 & 48.59 & 0.25 & 35.50 \\ 
  &$\hat{c}_{G}$ & -3.53 & 193.47 & -0.26 & 92.54 & 1.59 & 52.64 & 0.28 & 91.43 & -0.43 & 99.37 \\ 
  &$\hat{c}_{E}$ & -3.17 & 236.85 & -1.72 & 159.92 & -1.24 & 107.54 & -4.03 & 195.30 & 1.09 & 197.79 \\ \hline
\end{tabular}
\end{table}


When there is no LLOD, major observations from Tables~\ref{point.J.no}--\ref{point.c.no} can be summarized as follows. 
For estimating the Youden index $J$, 
the estimators $\hat J$, $\hat J_B$, $\hat J_G$, and $\hat J_K$ have comparable performance in terms of MSE, 
which are uniformly better than $\hat J_E$ and $\hat J_H$. 
When sample sizes are small, the kernel-based estimator $\hat J_K$ may have slightly smaller MSE than $\hat J$;
when the sample size increases, our proposed estimator $\hat J$ becomes more efficient in terms of MSE.  
This is in line with our discussion after Theorem 1.
The ECDF-based estimator $\hat J_E$ has the largest RBs and MSEs in almost all the cases.
We also notice that  when $J =0.2$, the RBs of $\hat J_B$, $\hat J_G$, and $\hat{J}_H$ 
have greater than 5\% RBs, which may not be acceptable, especially when one of $n_0$ and $n_1$ is small.

For estimating the optimal cutoff point $c$, our proposed estimator $\hat c$ outperforms other estimators significantly for the majority of cases. 
The parametric estimator $\hat{c}_B$ is most competitive. 
It has larger MSEs than $\hat c$ when $J=0.2,0.4, 0.6$
and has slightly smaller MSEs  than $\hat c$ when $J=0.8$. 
Among the other four estimators, the estimator $\hat{c}_E$ has the worst performance and $\hat c_G$ shows the best performance in most cases.
The performances of $\hat c_K$ and $\hat c_H$ are mixed. There is no obvious trend that one dominates the other one.

When the LLOD equals 15\% quantile of $F_0$,
Tables \ref{point.J.lod}--\ref{point.c.lod} show that 
the general trend for comparing the proposed method with 
the Box-Cox method, ROC-GLM method, and ECDF-based method
are similar to the case when there is no LLOD. 
It is worthy mentioning that  as the LLOD increases, the MSEs of all estimators increase, due to the loss of information under censoring. 
The estimation of the optimal cutoff point $c$ is more sensitive to the increase of LLOD especially when $J$ is small. 

\subsection{Comparison for confidence intervals} 

We now examine the behaviour of the 95\% CIs of $(J,c)$.  
The performance of a CI is evaluated by the coverage probability (CP) in percentage and the average length (AL) computed as 
\begin{equation*}
    CP = \frac{1}{B}\sum_{b=1}^{B}I(a_0 \in \mathcal{I}^{(b)}) \times 100,~~~ AL = \frac{1}{B}\sum_{b=1}^{B}|\mathcal{I}^{(b)}|,
\end{equation*}
where $\mathcal{I}^{(b)}$ is the CI of the interested quantity computed from the $b$th simulation run, and $|\cdot|$ is the length of the CI.
The simulation results are presented in Tables \ref{CI.J.no}--\ref{CI.c.lod}.

\begin{table}[!htt]
\centering
\small
\caption{CP (\%) and AL for CIs of $J$ when there is no LLOD}
\label{CI.J.no}
\begin{tabular}{cccccccccccc}
  \hline
 &$(n_0, n_1)$& \multicolumn{2}{c}{$(50, 50)$} & \multicolumn{2}{c}{$(100, 100)$} & \multicolumn{2}{c}{$(200, 200)$}& \multicolumn{2}{c}{$(50, 150)$} & \multicolumn{2}{c}{$(150, 50)$}\\ 
  \hline
   $J$& & CP & AL & CP & AL& CP & AL& CP & AL& CP & AL\\\hline
  \multirow{5}{*}{$0.2$}& $\mathcal{I}_J$ & 94.1 & 0.32 & 94.9 & 0.22 & 96.4 & 0.15 & 93.6 & 0.25 & 94.7 & 0.26 \\ 
  &$\mathcal{I}_{JB}$ & 94.5 & 0.30 & 93.9 & 0.21 & 95.2 & 0.15 & 92.7 & 0.24 & 94.6 & 0.25 \\ 
  &$\mathcal{I}_{JG}$ & 92.6 & 0.29 & 93.7 & 0.21 & 94.7 & 0.15 & 91.9 & 0.23 & 93.7 & 0.25 \\ 
  &$\mathcal{I}_{JE}$  & 70.2 & 0.30 & 72.1 & 0.22 & 78.0 & 0.16 & 66.7 & 0.24 & 72.6 & 0.26 \\ 
  &$\mathcal{I}_{JK}$ & 93.6 & 0.30 & 93.7 & 0.23 & 94.7 & 0.16 & 92.3 & 0.25 & 93.7 & 0.26 \\
  &$\mathcal{I}_{JH}$  & 92.1 & 0.31 & 94.0 & 0.23 & 94.1 & 0.17 & 91.1 &0.25  &93.3 & 0.27  \\ \hline
  \multirow{5}{*}{$0.4$}& $\mathcal{I}_J$ & 95.4 & 0.28 & 94.9 & 0.20 & 96.0 & 0.14 & 93.3 & 0.22 & 94.3 & 0.24 \\ 
  &$\mathcal{I}_{JB}$ & 95.1 & 0.28 & 94.2 & 0.20 & 95.3 & 0.14 & 93.1 & 0.22 & 94.2 & 0.24 \\ 
  &$\mathcal{I}_{JG}$ & 93.1 & 0.29 & 93.6 & 0.20 & 94.7 & 0.14 & 91.6 & 0.22 & 92.9 & 0.25 \\ 
  &$\mathcal{I}_{JE}$ & 79.2 & 0.30 & 80.5 & 0.22 & 83.5 & 0.16 & 75.1 & 0.23 & 80.4 & 0.26 \\ 
  &$\mathcal{I}_{JK}$ & 93.7 & 0.29 & 93.1 & 0.21 & 95.1 & 0.15 & 92.7 & 0.23 & 93.5 & 0.24 \\
  &$\mathcal{I}_{JH}$ & 93.7 & 0.31 & 95.0 & 0.22 &94.7  & 0.16 & 93.3 & 0.23 & 96.5 &0.28  \\ \hline
  \multirow{5}{*}{$0.6$}& $\mathcal{I}_J$ & 95.6 & 0.25 & 94.3 & 0.18 & 95.0 & 0.13 & 94.2 & 0.19 & 94.0 & 0.22 \\ 
  &$\mathcal{I}_{JB}$ & 95.8 & 0.25 & 94.2 & 0.18 & 95.1 & 0.13 & 93.8 & 0.19 & 94.2 & 0.22 \\ 
  &$\mathcal{I}_{JG}$ & 93.3 & 0.26 & 93.5 & 0.18 & 94.8 & 0.13 & 92.5 & 0.19 & 92.6 & 0.23 \\ 
  &$\mathcal{I}_{JE}$ & 79.4 & 0.26 & 80.7 & 0.19 & 85.3 & 0.14 & 75.7 & 0.20 & 81.8 & 0.23 \\ 
  &$\mathcal{I}_{JK}$ & 94.4 & 0.24 & 94.1 & 0.17 & 95.4 & 0.13 & 92.4 & 0.19 & 93.5 & 0.21 \\ 
  &$\mathcal{I}_{JH}$& 94.0 & 0.30 & 95.3 & 0.21 & 95.3 & 0.15 &92.6  &0.21  & 95.8 & 0.27 \\ \hline
  \multirow{5}{*}{$0.8$}& $\mathcal{I}_J$ & 95.1 & 0.20 & 94.0 & 0.14 & 95.0 & 0.10 & 95.1 & 0.14 & 95.9 & 0.18 \\ 
  &$\mathcal{I}_{JB}$ & 96.0 & 0.18 & 94.1 & 0.13 & 95.3 & 0.09 & 94.1 & 0.13 & 95.2 & 0.16 \\ 
  &$\mathcal{I}_{JG}$ & 92.7 & 0.20 & 94.0 & 0.14 & 95.2 & 0.10 & 93.3 & 0.14 & 92.2 & 0.18 \\ 
  &$\mathcal{I}_{JE}$ & 81.6 & 0.19 & 85.5 & 0.14 & 85.9 & 0.10 & 76.1 & 0.14 & 83.5 & 0.17 \\ 
  &$\mathcal{I}_{JK}$ & 95.0 & 0.17 & 91.8 & 0.13 & 91.0 & 0.09 & 93.6 & 0.13 & 92.3 & 0.15 \\
  &$\mathcal{I}_{JH}$ &  86.9 & 0.23&   95.8& 0.17 & 95.5 & 0.12 & 92.8 &0.17  & 92.2&	0.22\\
   \hline
\end{tabular}
\end{table}

\begin{table}[!htt]
\centering
\small
\caption{CP (\%) and AL for CIs of $c$ when there is no LLOD}
\label{CI.c.no} 
\begin{tabular}{cccccccccccc}
  \hline
 &$(n_0, n_1)$& \multicolumn{2}{c}{$(50, 50)$} & \multicolumn{2}{c}{$(100, 100)$} & \multicolumn{2}{c}{$(200, 200)$}& \multicolumn{2}{c}{$(50, 150)$} & \multicolumn{2}{c}{$(150, 50)$}\\ 
  \hline
  $J$& & CP & AL & CP & AL& CP & AL& CP & AL& CP & AL\\\hline
  \multirow{5}{*}{$0.2$}& $\mathcal{I}_c$ & 94.1 & 1.31 & 94.1 & 0.94 & 95.2 & 0.67 & 94.2 & 1.11 & 94.8 & 1.05 \\ 
  &$\mathcal{I}_{cB}$ & 92.8 & 7.26 & 93.9 & 2.64 & 93.8 & 1.77 & 93.7 & 3.25 & 92.6 & 3.25 \\ 
  &$\mathcal{I}_{cG}$ & 94.0 & 4.50 & 94.7 & 3.02 & 94.3 & 2.07 & 93.7 & 3.43 & 94.8 & 3.47 \\ 
  &$\mathcal{I}_{cE}$  & 97.0 & 5.36 & 96.5 & 4.65 & 96.8 & 3.89 & 95.6 & 4.83 & 97.2 & 5.05 \\ 
  &$\mathcal{I}_{cK}$ & 95.6 & 7.02 & 94.3 & 5.70 & 95.0 & 3.89 & 95.2 & 6.68 & 92.6 & 6.36 \\ 
  &$\mathcal{I}_{cH}$ & 97.2& 5.98 & 97.9  & 4.79 &95.8  & 3.67 & 96.4 & 4.59 &95.4 &5.80  \\\hline
 \multirow{5}{*}{$0.4$}& $\mathcal{I}_c$ & 93.9 & 1.63 & 94.5 & 1.16 & 95.6 & 0.82 & 93.3 & 1.40 & 94.5 & 1.27 \\ 
 &$\mathcal{I}_{cB}$  & 92.0 & 2.38 & 92.0 & 1.72 & 92.5 & 1.23 & 92.5 & 1.98 & 92.1 & 1.83 \\ 
 &$\mathcal{I}_{cG}$ & 91.5 & 3.13 & 94.0 & 2.30 & 93.4 & 1.63 & 92.2 & 2.36 & 93.4 & 2.31 \\ 
  &$\mathcal{I}_{cE}$  & 95.1 & 4.52 & 96.6 & 3.80 & 95.0 & 3.18 & 94.2 & 4.02 & 96.8 & 4.32 \\ 
  &$\mathcal{I}_{cK}$ & 93.9 & 4.23 & 93.2 & 2.91 & 93.4 & 2.19 & 94.3 & 3.55 & 92.5 & 3.13 \\ 
  &$\mathcal{I}_{cH}$ & 97.0 & 4.87 & 95.6 & 3.83 & 94.6 &2.92  & 94.4 & 3.97 &  97.1& 4.58 \\ \hline
  \multirow{5}{*}{$0.6$}& $\mathcal{I}_c$ & 93.2 & 2.13 & 94.4 & 1.52 & 95.0 & 1.08 & 93.4 & 1.85 & 94.0 & 1.63 \\ 
  &$\mathcal{I}_{cB}$  & 91.5 & 2.36 & 92.1 & 1.70 & 92.2 & 1.21 & 91.4 & 1.99 & 92.3 & 1.76 \\ 
  &$\mathcal{I}_{cG}$ & 90.2 & 3.57 & 90.8 & 2.59 & 92.4 & 1.86 & 90.9 & 2.57 & 91.1 & 2.44 \\ 
  &$\mathcal{I}_{cE}$  & 94.5 & 4.43 & 95.7 & 3.76 & 97.0 & 3.05 & 92.2 & 3.96 & 95.9 & 4.21 \\ 
  &$\mathcal{I}_{cK}$ & 92.7 & 3.60 & 90.9 & 2.73 & 90.0 & 2.04 & 92.7 & 3.31 & 87.8 & 2.65 \\ 
  &$\mathcal{I}_{cH}$& 96.7 & 4.80 &96.6  & 3.76 &  96.4& 2.78 & 94.7 & 3.80 & 97.6 &4.50  \\  \hline
  \multirow{5}{*}{$0.8$}& $\mathcal{I}_c$ & 92.9 & 3.25 & 94.3 & 2.31 & 95.0 & 1.65 & 93.8 & 2.78 & 93.7 & 2.46 \\ 
  &$\mathcal{I}_{cB}$  & 91.9 & 2.87 & 93.3 & 2.03 & 94.5 & 1.44 & 90.6 & 2.49 & 93.8 & 2.14 \\ 
  &$\mathcal{I}_{cG}$ & 87.8 & 5.18 & 91.5 & 3.81 & 90.0 & 2.71 & 92.0 & 3.65 & 86.6 & 3.27 \\ 
  &$\mathcal{I}_{cE}$  & 88.8 & 4.93 & 93.7 & 4.28 & 93.9 & 3.55 & 84.5 & 4.42 & 94.5 & 4.78 \\ 
  &$\mathcal{I}_{cK}$ & 91.3 & 4.34 & 90.9 & 3.29 & 89.7 & 2.49 & 92.1 & 4.06 & 88.8 & 3.16 \\  
  &$\mathcal{I}_{cH}$&95.6 & 4.74  & 97.3  & 3.99 &94.8  &2.91  & 93.6 &3.97  & 98.7 &4.74 \\ \hline
\end{tabular}
\end{table}

\begin{table}[!htt]
\centering
\small
\caption{CP (\%) and AL for CIs of  $J$ when the LLOD equals 15\% quantile of $F_0$}\label{CI.J.lod}
\begin{tabular}{cccccccccccc}
  \hline
 &$(n_0, n_1)$& \multicolumn{2}{c}{$(50, 50)$} & \multicolumn{2}{c}{$(100, 100)$} & \multicolumn{2}{c}{$(200, 200)$}& \multicolumn{2}{c}{$(50, 150)$} & \multicolumn{2}{c}{$(150, 50)$}\\ 
  \hline
   $J$& & CP & AL & CP & AL& CP & AL& CP & AL& CP & AL\\\hline
   \multirow{4}{*}{$0.2$}& $\mathcal{I}_J$ & 93.4 & 0.31 & 94.2 & 0.21 & 96.0 & 0.15 & 93.8 & 0.25 & 94.3 & 0.25 \\ 
  &$\mathcal{I}_{JB}$ & 94.3 & 0.30 & 93.9 & 0.21 & 95.2 & 0.15 & 92.6 & 0.24 & 94.5 & 0.25 \\ 
  &$\mathcal{I}_{JG}$ & 93.2 & 0.31 & 94.0 & 0.23 & 94.7 & 0.16 & 92.9 & 0.25 & 93.2 & 0.26 \\ 
  &$\mathcal{I}_{JE}$ & 64.8 & 0.28 & 68.2 & 0.21 & 73.4 & 0.15 & 61.4 & 0.23 & 65.0 & 0.24 \\ \hline
  \multirow{4}{*}{$0.4$}& $\mathcal{I}_J$ & 95.3 & 0.28 & 94.9 & 0.20 & 95.8 & 0.15 & 93.7 & 0.22 & 94.5 & 0.24 \\ 
  &$\mathcal{I}_{JB}$ & 95.4 & 0.28 & 94.0 & 0.20 & 95.4 & 0.14 & 93.3 & 0.22 & 94.6 & 0.24 \\ 
  &$\mathcal{I}_{JG}$ & 94.6 & 0.30 & 94.8 & 0.22 & 95.2 & 0.15 & 92.7 & 0.24 & 93.3 & 0.26 \\ 
  &$\mathcal{I}_{JE}$ & 77.4 & 0.28 & 78.5 & 0.21 & 81.7 & 0.15 & 73.8 & 0.22 & 78.1 & 0.25 \\ \hline
  \multirow{4}{*}{$0.6$}& $\mathcal{I}_J$& 95.7 & 0.26 & 94.6 & 0.18 & 94.7 & 0.13 & 94.0 & 0.19 & 94.2 & 0.22 \\ 
  &$\mathcal{I}_{JB}$ & 95.7 & 0.25 & 94.2 & 0.18 & 94.7 & 0.13 & 93.8 & 0.19 & 94.2 & 0.22 \\ 
  &$\mathcal{I}_{JG}$ & 94.3 & 0.27 & 93.9 & 0.19 & 94.5 & 0.14 & 92.9 & 0.20 & 93.4 & 0.23 \\ 
  &$\mathcal{I}_{JE}$ & 77.6 & 0.25 & 79.3 & 0.19 & 82.8 & 0.14 & 74.3 & 0.19 & 80.3 & 0.22 \\ \hline
  \multirow{4}{*}{$0.8$}& $\mathcal{I}_J$ & 95.5 & 0.20 & 94.2 & 0.14 & 95.3 & 0.10 & 94.9 & 0.14 & 95.7 & 0.18 \\ 
  &$\mathcal{I}_{JB}$ & 96.2 & 0.18 & 94.8 & 0.13 & 94.6 & 0.09 & 94.6 & 0.13 & 95.2 & 0.16 \\ 
  &$\mathcal{I}_{JG}$ & 93.7 & 0.21 & 93.5 & 0.15 & 95.4 & 0.11 & 94.3 & 0.15 & 93.8 & 0.19 \\ 
  &$\mathcal{I}_{JE}$ & 80.2 & 0.18 & 85.3 & 0.14 & 86.4 & 0.10 & 75.1 & 0.14 & 83.1 & 0.17 \\ \hline
   \end{tabular}
\end{table}

\begin{table}[!htt]
\centering
\small
\caption{CP (\%) and AL for CIs of $c$ when the LLOD equals 15\% quantile of $F_0$}\label{CI.c.lod}
\begin{tabular}{cccccccccccc}
  \hline
 &$(n_0, n_1)$& \multicolumn{2}{c}{$(50, 50)$} & \multicolumn{2}{c}{$(100, 100)$} & \multicolumn{2}{c}{$(200, 200)$}& \multicolumn{2}{c}{$(50, 150)$} & \multicolumn{2}{c}{$(150, 50)$}\\ 
  \hline
   $J$& & CP & AL & CP & AL& CP & AL& CP & AL& CP & AL\\\hline
   \multirow{4}{*}{$0.2$}& $\mathcal{I}_c$ & 97.5 & 5.97 & 97.5 & 1.95 & 96.5 & 1.26 & 96.9 & 2.73 & 96.8 & 4.02 \\ 
  &$\mathcal{I}_{cB}$ & 83.6 & 5.06 & 83.7 & 2.10 & 81.8 & 1.43 & 77.4 & 2.35 & 84.8 & 2.74 \\ 
  &$\mathcal{I}_{cG}$ & 95.3 & 4.83 & 93.1 & 3.27 & 89.5 & 2.23 & 91.1 & 3.80 & 91.9 & 3.77 \\ 
  &$\mathcal{I}_{cE}$ & 96.6 & 5.07 & 95.4 & 4.41 & 96.9 & 3.75 & 94.6 & 4.55 & 96.4 & 4.82 \\ \hline
  \multirow{4}{*}{$0.4$}& $\mathcal{I}_c$ & 95.3 & 1.87 & 95.0 & 1.32 & 95.7 & 0.93 & 94.0 & 1.53 & 95.5 & 1.52 \\ 
  &$\mathcal{I}_{cB}$ & 86.8 & 2.19 & 87.9 & 1.58 & 87.8 & 1.12 & 83.5 & 1.58 & 91.3 & 1.84 \\ 
  &$\mathcal{I}_{cG}$ & 94.3 & 3.20 & 92.7 & 2.36 & 87.3 & 1.68 & 91.2 & 2.48 & 92.0 & 2.41 \\ 
  &$\mathcal{I}_{cE}$ & 95.1 & 4.43 & 95.9 & 3.74 & 95.3 & 3.14 & 93.6 & 3.93 & 97.3 & 4.27 \\ \hline
 \multirow{4}{*}{$0.6$}& $\mathcal{I}_c$ & 93.4 & 2.17 & 94.3 & 1.55 & 95.3 & 1.10 & 93.5 & 1.85 & 94.3 & 1.70 \\ 
  &$\mathcal{I}_{cB}$ & 91.0 & 2.39 & 91.3 & 1.71 & 92.1 & 1.22 & 87.8 & 1.81 & 93.0 & 1.91 \\ 
  &$\mathcal{I}_{cG}$ & 94.0 & 3.59 & 94.2 & 2.62 & 92.3 & 1.88 & 93.7 & 2.65 & 93.6 & 2.51 \\ 
  &$\mathcal{I}_{cE}$ & 94.4 & 4.40 & 95.5 & 3.72 & 96.5 & 3.04 & 91.6 & 3.94 & 95.9 & 4.17 \\ \hline
  \multirow{4}{*}{$0.8$}& $\mathcal{I}_c$& 92.9 & 3.24 & 94.4 & 2.31 & 94.9 & 1.65 & 93.2 & 2.78 & 93.7 & 2.48 \\ 
  &$\mathcal{I}_{cB}$& 92.8 & 2.99 & 94.9 & 2.14 & 94.7 & 1.52 & 91.1 & 2.51 & 94.2 & 2.27 \\ 
  &$\mathcal{I}_{cG}$& 92.5 & 5.28 & 95.0 & 3.79 & 94.8 & 2.71 & 94.2 & 3.74 & 92.8 & 3.39 \\ 
  &$\mathcal{I}_{cE}$& 88.5 & 4.90 & 93.5 & 4.26 & 94.3 & 3.54 & 84.7 & 4.39 & 94.3 & 4.76 \\ \hline
   \end{tabular}
\end{table}

We first summarize the findings from the CIs for the Youden index $J$. 
We can see that the CPs of $\mathcal{I}_{JE}$ are not acceptable regardless of the value of LLODs. 
The proposed CI  and the CI based on the Box-Cox method, $\mathcal{I}_{J}$ and $\mathcal{I}_{JB}$,
have the comparable and most stable performance in almost all cases.
The GLM-ROC based CI, $\mathcal{I}_{JG}$, perform quite well overall, but  may have undercoverage in some cases. 
When there is no LLOD, the two confidence intervals $\mathcal{I}_{JK}$ and $\mathcal{I}_{JH}$ have a similar issue for $\mathcal{I}_{JG}$ with  undercoverage problems. 

We next discuss the findings from the CIs for the optimal cutoff point $c$.  
When there is no LLOD, the proposed CI \ $\mathcal{I}_{c}$ \ has the most stable performance
and its CPs are reasonably close to 95\% in almost all scenarios. 
The CPs of \ $\mathcal{I}_{cE}$ \ fluctuate around the nominal level 95\% while 
undercoverage problems are associated with the other four CIs $\mathcal{I}_{cB}$, $\mathcal{I}_{cG}$, $\mathcal{I}_{cK}$,  and $\mathcal{I}_{cH}$. 
When there is a fixed and finite LLOD, the ALs of all CIs increase. 
The proposed CI \ $\mathcal{I}_{c}$ \ and the ECDF-based CI \ $\mathcal{I}_{cE}$ \ tend to have an issue with overcoverage,
while the CI based on the Box-Cox method has severe undercoverage problem 
and the GLM-ROC based CI \ $\mathcal{I}_{cG}$ \ also has the same issue for some cases. 
When $J=0.4,0.6,0.8$, our proposed CI \ $\mathcal{I}_{c}$ \ becomes quite stable in almost all cases.   
The performance of $\mathcal{I}_{cB}$ improves as $J$ increases. 
The CPs of $\mathcal{I}_{cG}$ are reasonably close to the nominal level. 
However, $\mathcal{I}_{cG}$ has longer ALs compared to $\mathcal{I}_{c}$.


\section{Real Data Analysis}
\label{sec6}

In this section, we illustrate the performance of the proposed method by analyzing a dataset on Duchenne Muscular Dystrophy (DMD).  The DMD is a genetic disorder characterized by progressive muscle degeneration and weakness. 
A particular gene on the X chromosome, when mutated, leads to DMD. 
This disease is transmitted from a mother to her children genetically. 
Affected male offsprings usually develop the disease and die at a young age while the mutated gene does not affect the health of female offsprings. Therefore, detection of potential affected females is of great interest.

\cite{percy1982duchenne} pointed out that carriers of DMD tend to exhibit high levels of certain biomarkers even though they do not show any symptoms. \cite{andrews2012data} collected the complete data of four biomarkers, namely, creatine kinase (CK), hemopexin (H), lactate dehydroginase (LD), and pyruvate kinase (PK),  from the blood serum samples of a healthy group of people ($n_0$ = 127) and a group of carriers ($n_1$ = 67). Our goal is to choose the most appropriate biomarker to distinguish healthy individuals from diseased ones. 

We choose $\mathbf{q}(x)=x$ in the proposed method for each biomarker, 
which is equivalent to assuming a logistic regression model for an individual's disease status and the biomarker \citep{qin1997goodness}. 
Table~\ref{drm.test} presents \cite{qin1997goodness}'s test statistics along with the p-values 
for the goodness of fit of the DRM in \eqref{drm} with $\mathbf{q}(x)=x$.  
It shows that for each biomarker, the data does not provide evidence to reject the DRM in \eqref{drm} with $\mathbf{q}(x)=x$. 

\begin{table}[!htt]
\centering
\small
\caption{ \cite{qin1997goodness}'s test statistics and their p-values when $\mathbf{q}(x)=x$.\label{drm.test}}
\begin{tabular}{c|cccc}
  \hline
  Biomarker & CK & LD & PH& H  \\ 
  \hline
Test statistic  & 0.211& 0.377 & 0.346 & 0.339 \\
  P-value & 0.912 & 0.291& 0.507 & 0.676\\
   \hline
\end{tabular}
\end{table}

Table \ref{realdata.results} provides the point estimates and the CIs (in parentheses) from our proposed method and all the competitive methods listed in Section \ref{sec5}. 
As we can see, for all biomarkers, the point estimates of Youden index are similar for all methods: they differ only in the second digit. 
For the CIs of the Youden index, the methods with $\hat J$, $\hat J_B$, $\hat J_G$, and $\hat J_K$ have similar performances for all biomarkers; 
the CIs with $\hat J_E$ and $\hat J_H$ tend to be wider than other four methods. 
For the optimal cutoff point, the point estimates have substantial differences, especially for the biomarker LD, compared with the estimates of the Youden index. 
For all biomarkers, the proposed method has the shortest CIs, while the ECDF-based method  and HCNS method tend to have the widest CIs. 
The performances of other three CIs are mixed: the CI based on the Box-Cox method has shorter length for biomarkers CK and LD, 
while the CIs based on GLM-ROC and kernel methods have shorter length for biomarkers PK and H. 
Furthermore, we find that the biomarker CK gives the largest estimated Youden index which is around 0.6. 
Therefore, the biomarker CK performs the best among these four biomarkers
to distinguish the diseased individuals and the healthy ones. The estimated optimal cutoff point for the biomarker CK using our proposed method is 61.13 with the 95\% CI being $(54.59,67.68)$.

\begin{table}[!htt]
\centering
\small
\caption{Estimation of the Youden index and the optimal cutoff point with the DMD dataset}
\label{realdata.results}
    \begin{tabular}{ccccc}
    \hline
    &CK&LD&PK&H\\ \hline
  $\hat{J}$ & 0.59 (0.48, 0.69 ) & 0.55 (0.45, 0.65) & 0.49 (0.38, 0.59) & 0.36 (0.26, 0.48) \\ 
  $\hat{J}_B$ & 0.62 (0.51, 0.70) & 0.56 (0.46, 0.66) & 0.48 (0.37, 0.58) & 0.37 (0.26, 0.48) \\ 
  $\hat{J}_G$ & 0.60 (0.50, 0.71) & 0.57 (0.47, 0.68) & 0.48 (0.38, 0.61) & 0.39 (0.29, 0.50) \\ 
  $\hat{J}_E$ & 0.61 (0.52, 0.73) & 0.58 (0.50, 0.72) & 0.51 (0.42, 0.65) & 0.42 (0.34, 0.57) \\ 
  $\hat{J}_K$ & 0.59 (0.51, 0.67) & 0.55 (0.45, 0.66) & 0.47 (0.37, 0.58) & 0.37 (0.25, 0.49) \\ 
  $\hat{J}_H$ & 0.61 (0.52, 0.80) & 0.57 (0.46, 0.70) & 0.48 (0.35, 0.62) & 0.40 (0.31, 0.56) \\
   \hline
  $\hat{c}$ & 61.13 (54.59, 67.68) & 198.56 (190.34, 206.78) & 15.54 (14.65, 16.43) & 87.74 (86.09, 89.39) \\ 
  $\hat{c}_B$ & 58.01 (51.17, 65.42) & 200.01 (188.99, 209.41) & 16.56 (14.83, 18.24) & 86.73 (83.59, 89.35) \\ 
  $\hat{c}_G$ & 55.60 (48.83, 68.41) & 197.54 (183.47, 211.64) & 15.81 (14.58, 16.79) & 85.25 (82.31, 87.90) \\ 
  $\hat{c}_E$ & 56.00 (43.00, 75.00) & 187.00 (181.00, 232.00) & 16.60 (14.00, 18.20) & 87.20 (80.50, 88.50) \\ 
  $\hat{c}_K$ & 73.36 (54.15, 79.16) & 202.32 (188.31, 216.94) & 17.22 (15.87, 18.28) & 85.52 (82.84, 88.36) \\ 
  $\hat{c}_H$ & 52.02 (43.01, 68.50) & 202.92 (179.20, 221.22) &  14.37 (12.34, 18.05)&   82.90 (80.26, 92.10)\\
  \hline
    \end{tabular}
\end{table}

\section{Concluding Remarks}
\label{sec7}

In this paper, we propose to link the distributions of the biomarkers in the diseased and healthy groups 
via the DRM \eqref{drm}. 
Based on this model, we obtain the maximum empirical likelihood estimators of the Youden index and the corresponding optimal cutoff point. 
We further establish the asymptotic normality of the estimators, which enables us to construct valid CIs for the Youden index and the corresponding optimal cutoff point. 
The proposed method covers cases without a LLOD and also cases with a  fixed and finite LLOD. 
Simulation studies and a real data application demonstrate the advantages of our proposed method over existing methods. 

One problem arising from the simulation studies is that the proposed confidence interval $\mathcal{I}_c$ for the optimal cutoff point could have under/over coverage issues under certain scenarios, especially when there is a fixed and finite LLOD and one of the sample sizes is small. A possible alternative approach is to consider the EL ratio based CI for $c$.  Another problem is related to the real data application where there are multiple biomarkers. \cite{yin2014optimal} studied the optimal linear combination of multiple biomarkers based on the Youden index. 
We can first use the DRM to link multiple biomarkers, then construct a derived optimal linear combination of the biomarkers and find the optimal cutoff point based on the derived biomarker.
Both research problems are currently under investigation. 

We conclude the paper with some discussion on the choice of $\mathbf{q}(x)$.
To use the proposed method, we need to specify $\mathbf{q}(x)$ in advance. 
If the practitioners believe that a logistic regression model is adequate to describe the relationship between the individual's disease status and the biomarker, then they can use the DRM (\ref{drm}) with $\mathbf{q}(x)=x$.
If the practitioners believe that gamma distributions or lognormal distributions provide good fittings to  the biomarkers in the healthy and diseased groups,  then they can use the semiparametric DRM (\ref{drm}) with $\mathbf{q}(x)=(x,\log x)^T$ or $(\log x,\log^2 x)^T$ instead of a parametric model to achieve robustness of inferences.   
The DRM (\ref{drm}) with a particular choice of $\mathbf{q}(x)$ can be further checked by the goodness of fit test discussed in  \cite{qin1997goodness}.
We have implemented our proposed method along with \cite{qin1997goodness}'s test for some commonly used $\mathbf{q}(x)$
 in an R package \texttt{YoudenDRM}. It is available upon request. 
 However, if the practitioners do not have any prior belief or information on the distributions of the biomarkers in the healthy and diseased groups, then a nonparametric method such as the kernel-based method and the HCNS method may be preferable.

%

\section*{Appendix: Regularity Conditions and Proofs}

\subsection*{A.1 \ Regularity conditions}

The asymptotic properties of $( \hat J, \hat c)$ rely on the following regularity conditions. 

\vspace{1ex}
\noindent
{\bf C1}. For any $\epsilon >0$,  $  J_{\epsilon} =  \sup\limits_{|x-c_0|\geq \epsilon } \{F_0(x) - F_1(x)\} <  J_0.$

\noindent
{\bf C2}. The first and second derivatives of  $F_0(x)$ and $F_1(x)$  are continuous  in the  neighbourhood of $c_0$, with  $F'_0(c_0)-F_1'(c_0)=0$ and $F''_0(c_0)-F_1''(c_0)<0$.

\noindent
{\bf C3}. The total sample size $n=n_0+n_1\to \infty$ and $\rho=n_1/n_0$ remains a constant. 

\noindent
{\bf C4}. The two CDFs $F_0$ and $F_1$ satisfy the DRM (\ref{drm}) with a true parameter value $\boldsymbol{\theta}_0$ 
and  
$\int_{r}^\infty \exp\{\boldsymbol{\theta}^T\mathbf{Q}(x)\}dF_0 < \infty$ in a neighborhood of $\boldsymbol{\theta}_0$, and  $\int_{r}^\infty \mathbf{Q}(x)\mathbf{Q}(x)^T dF_0(x)$ is positive definite.

Condition C1 is from \cite{hsieh1996nonparametric}, which ensures $c_0$ is unique. 
Condition C2 comes from the definitions of the Youden index and its corresponding optimal cutoff point. 
Conditions C3 and C4 guarantee that the asymptotic results in \cite{cai2018empirical} can be applied. 

%

\subsection*{A.2 \ Some preparations}
This section serves as preparations for the proof of Theorem 1. 
We first introduce some further notation. 
Let 
$$
H(x) = F_0(x)-F_1(x),~~
 \hat{H}(x) = \hat{F}_0(x) - \hat{F}_1(x).$$ 
Then $J_0=H(c_0)$ and $\hat J= \hat H(\hat c)$.  
Further let 
$$
\Delta_{n0}=\sup_{x\geq r}|\hat F_0(x)-F_0(x) |,~~
\Delta_{n1}=\sup_{x\geq r}|\hat F_1(x)-F_0(x) |,~~
 \Delta_{n}=\sup_{x\geq r}| \hat{H} (x)  |. 
$$
Following the proof of Lemma 3 in \cite{cai2018empirical}, we have $\Delta_{n0}=O_p(n^{-1/2})$ and $\Delta_{n1}=O_p(n^{-1/2})$. 
Hence  $\Delta_{n}=O_p(n^{-1/2})$. 

We can establish the consistency of $\hat c$ and argue that,  
with the probability goes to 1, the estimator $\hat c$ is the solution to $\hat{\boldsymbol{\theta}}^T\mathbf{Q}(x) = 0$.

\begin{lemma}
\label{lemma.consistency}
Assume Conditions C1--C4 are satisfied. Then, as $n \to \infty$, we have 
\begin{equation}
\label{lemma.result1}
\hat{c} \xrightarrow{p} c_0~~\mbox{in probability}
\end{equation}
 and 
\begin{equation}
\label{lemma.result2}
P\left(
   \hat{\boldsymbol{\theta}}^T\mathbf{Q}(\hat c) = 0
   \right)\to 1.
\end{equation}
\end{lemma}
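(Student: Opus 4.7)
The plan is to combine an implicit function theorem argument near $c_0$ with a standard identifiable-maximum consistency argument. The three ingredients I would invoke are: (i) $\hat{\boldsymbol{\theta}} \xrightarrow{p} \boldsymbol{\theta}_0$, which follows from \cite{cai2018empirical} under Conditions C3--C4; (ii) the uniform convergence $\sup_{x\geq r}|\hat H(x) - H(x)| = O_p(n^{-1/2})$, inherited from the stated rates on $\Delta_{n0}$ and $\Delta_{n1}$; and (iii) identifiability of $c_0$ as the unique maximizer of $H$ together with the strict local concavity given by Conditions C1 and C2.

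To establish \eqref{lemma.result2}, I would first check that $g(\boldsymbol{\theta},x) = \boldsymbol{\theta}^T\mathbf{Q}(x)$ satisfies $g(\boldsymbol{\theta}_0,c_0) = 0$ (this is \eqref{solution.c}) and $\partial_x g(\boldsymbol{\theta}_0,c_0) \neq 0$. For the latter, differentiating $H(x) = \int_{-\infty}^{x}(1-\omega(y))\,dF_0(y)$ under the DRM gives $H''(c_0) = -f_0(c_0)\,\boldsymbol{\beta}_0^T\dot{\mathbf{q}}(c_0)$, and Condition C2 forces $H''(c_0)<0$, so $\partial_x g(\boldsymbol{\theta}_0,c_0) = \boldsymbol{\beta}_0^T\dot{\mathbf{q}}(c_0) > 0$. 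The implicit function theorem then yields neighborhoods $U\ni c_0$ and $V\ni\boldsymbol{\theta}_0$ together with a continuous map $\boldsymbol{\theta}\mapsto\tilde c(\boldsymbol{\theta})$ on $V$ satisfying $g(\boldsymbol{\theta},\tilde c(\boldsymbol{\theta}))=0$ uniquely in $U$. Consistency of $\hat{\boldsymbol{\theta}}$ then gives $\tilde c_n:=\tilde c(\hat{\boldsymbol{\theta}})\xrightarrow{p} c_0$. Since $c_0>r$ lies strictly interior to the support of both $F_0$ and $F_1$ restricted to $[r,\infty)$, both samples straddle $c_0$ with probability tending to one, so $\tilde c_n\in[\min_i t_i,\max_i t_i]$. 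On this high-probability event, the defining rule for $\hat c$ forces it to be a root of $\hat{\boldsymbol{\theta}}^T\mathbf{Q}(x)=0$, which proves \eqref{lemma.result2}.

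For \eqref{lemma.result1}, on the same event the optimality built into the definition of $\hat c$ gives $\hat H(\hat c)\geq \hat H(\tilde c_n)$. Uniform convergence of $\hat H$ to $H$, continuity of $H$, and $\tilde c_n\xrightarrow{p} c_0$ imply $\hat H(\tilde c_n)\xrightarrow{p} H(c_0)=J_0$, and hence $\hat H(\hat c)\geq J_0-o_p(1)$. If $\hat c$ were inconsistent, then along a subsequence we would have $|\hat c-c_0|>\epsilon$ with positive limiting probability; Condition C1 would then give $H(\hat c)\leq J_\epsilon<J_0$, and the uniform bound would give $\hat H(\hat c)\leq J_\epsilon+o_p(1)$, contradicting the lower bound just established. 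The complementary event on which no root exists in $[\min_i t_i,\max_i t_i]$ has vanishing probability and can be absorbed without further work.

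The main technical point requiring care is confirming that the implicit-function-theorem root $\tilde c_n$ lands inside the sample range with probability tending to one; this hinges on $c_0$ being strictly interior to the support of both $F_0|_{[r,\infty)}$ and $F_1|_{[r,\infty)}$, a mild consequence of $c_0>r$ together with the continuity of the densities at $c_0$ built into C2. Once this is in hand, the rest of the argument is a standard combination of uniform consistency and an identifiable-maximum step, and no delicate empirical process tools are needed beyond what is already provided by \cite{cai2018empirical}.
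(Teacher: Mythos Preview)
Your proposal is correct and follows essentially the same approach as the paper: both arguments produce a root of $\hat{\boldsymbol{\theta}}^T\mathbf{Q}(x)=0$ near $c_0$ (you via the implicit function theorem, the paper via a sign-change/intermediate-value argument on $[c_0-\epsilon^*,c_0+\epsilon^*]$), and both then combine the maximizing property of $\hat c$ among roots with the uniform bound $\Delta_n=O_p(n^{-1/2})$ and Condition~C1 to rule out $|\hat c-c_0|>\epsilon$. The only cosmetic differences are that the paper works with fixed events $A_{n1,\epsilon},A_{n2,\epsilon},A_{n3,\epsilon}$ rather than invoking the implicit function theorem, and it derives \eqref{lemma.result1} before \eqref{lemma.result2} rather than after; the underlying computation $H''(c_0)=-f_0(c_0)\boldsymbol{\beta}_0^T\dot{\mathbf{q}}(c_0)<0$ that you make explicit is exactly what the paper uses implicitly to justify its sign-change property~(b).
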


\begin{proof}{Proof}{}
For \eqref{lemma.result1}, 
it is sufficient to show that for any $0<\epsilon<c_0-r$, 
\begin{eqnarray}
    \label{consis1}
    \lim \limits_{n \to \infty} P(\hat{c} > c_0+\epsilon) = 0,\\
        \label{consis2}
    \lim \limits_{n \to \infty} P(\hat{c} < c_0-\epsilon) = 0. 
\end{eqnarray}
We focus on proving \eqref{consis1}. The other part in \eqref{consis2} can be similarly proved. 
We choose  $\epsilon^{\ast} < \epsilon$ such that 
\begin{itemize}
    \item[](a) $H(x) \geq \frac{J_0+J_\epsilon}{2}$, for $x \in [c_0-\epsilon^{\ast},c_0+\epsilon^{\ast}]$;
    \item[](b) $\boldsymbol{\theta}_0^T\mathbf{Q}(c_0-\epsilon^{\ast}) < 0$ and $\boldsymbol{\theta}_0^T\mathbf{Q}(c_0+\epsilon^{\ast}) > 0$.
 \end{itemize}
By Conditions C1 and C2, the existence of such $\epsilon^{\ast}$ is obvious. 
We further define a subset of the sample space as 
$A_{n,\epsilon} = A_{n1,\epsilon} \cap A_{n2,\epsilon} \cap A_{n3,\epsilon},$
where
\begin{eqnarray*}
   A_{n1,\epsilon}&=&\left\{ \hat{\boldsymbol{\theta}}^T\mathbf{Q}(c_0-\epsilon^{\ast}) < \frac{1}{2}\boldsymbol{\theta}_0^T\mathbf{Q}(c_0-\epsilon^{\ast})\right\},  \\
   A_{n2,\epsilon}&=&\left\{ \hat{\boldsymbol{\theta}}^T\mathbf{Q}(c_0+\epsilon^{\ast}) > \frac{1}{2}\boldsymbol{\theta}_0^T\mathbf{Q}(c_0+\epsilon^{\ast})\right\}, \\
   A_{n3,\epsilon}& =&\left\{  \inf_{x \in [c_0-\epsilon^{\ast},c_0+\epsilon^{\ast}]} \hat{H}(x) \geq \frac{J_0+3J_{\epsilon}}{4}\right\}. 
\end{eqnarray*}
The two subsets $A_{n1,\epsilon}$ and $A_{n2,\epsilon}$ together ensure that there exists a solution $\hat c^*$  to $\hat{\boldsymbol{\theta}}^T\mathbf{Q}(x) = 0$ in $[c_0-\epsilon^{\ast},c_0+\epsilon^{\ast}]$,  and $A_{n3,\epsilon}$ implies that  $\hat{H}(\hat c^*)$ is very close to $J_0$.

With the choice of $\epsilon^{\ast}$, the consistency of $\hat{\boldsymbol{\theta}}$ \citep{cai2018empirical}, 
and the fact that $\Delta_n=O_p(n^{-1/2})$, it can be shown that
\begin{equation}
\label{limit.An}
\lim \limits_{n \to \infty} P(A_{n1,\epsilon}) = \lim \limits_{n \to \infty} P(A_{n2,\epsilon}) =
\lim \limits_{n \to \infty} P(A_{n3,\epsilon}) = 1. 
\end{equation}
The details are sketched as follows. By the choice of $\epsilon^{\ast}$, 
\begin{eqnarray*}
    P(A_{n1,\epsilon}) &=& P\left(\hat{\boldsymbol{\theta}}^T\mathbf{Q}(c_0-\epsilon^{\ast}) -  \boldsymbol{\theta}_0^T\mathbf{Q}(c_0-\epsilon^{\ast})< -\frac{1}{2}\boldsymbol{\theta}_0^T\mathbf{Q}(c_0-\epsilon^{\ast})\right)\\
    &\geq& P\left(\big|\hat{\boldsymbol{\theta}}^T\mathbf{Q}(c_0-\epsilon^{\ast}) -\boldsymbol{\theta}_0^T\mathbf{Q}(c_0-\epsilon^{\ast})\big|< -\frac{1}{2}\boldsymbol{\theta}_0^T\mathbf{Q}(c_0-\epsilon^{\ast})\right). 
\end{eqnarray*}
Then by the consistency of $\hat{\boldsymbol{\theta}}$ \citep{cai2018empirical}, 
we have 
$\lim \limits_{n \to \infty} P(A_{n1,\epsilon}) = 1$. 
Similarly, we also have 
$
\lim \limits_{n \to \infty} P(A_{n2,\epsilon}) = 1$.
As for the third term $A_{n3,\epsilon}$, again by the choice of $\epsilon^{\ast}$, 
when $x \in [c_0-\epsilon^{\ast},c_0+\epsilon^{\ast}]$, we have 
\begin{eqnarray*}
\hat{H}(x) &=& \{\hat{H}(x) - H(x) + H(x)\}
    \geq -\Delta_{n} + \frac{J_0+J_{\epsilon}}{2}.
\end{eqnarray*}
Therefore, 
$$
P(A_{n3,\epsilon}) \geq P\left(
-\Delta_{n} + \frac{J_0+J_{\epsilon}}{2}\geq \frac{J_0+3J_{\epsilon}}{4}
\right)= P\left(\Delta_{n} \leq \frac{J_0-J_{\epsilon}}{4}\right).
$$
Since $\Delta_{n} = O_p(n^{-\frac{1}{2}})$, we have 
$
\lim \limits_{n \to \infty} P(A_{n3,\epsilon}) = 1$. 

We are now ready to prove \eqref{consis1}.  Note that 
\begin{eqnarray*}
P(\hat{c} > c_0+\epsilon)&\leq &P\big(
H(\hat{c})\leq J_{\epsilon}
\big)\leq P\big(
\hat H(\hat{c})\leq J_{\epsilon}+\Delta_n
\big)\\
&\leq&P\Big(\{\hat H(\hat{c})\leq J_{\epsilon}+\Delta_n \}\cap A_{n,\epsilon}\Big)+P(A_{n,\epsilon}^c).
\end{eqnarray*}
By the definition of $A_{n,\epsilon}$, if $\{\hat H(\hat{c})\leq J_{\epsilon}+\Delta_n \}$ and $A_{n,\epsilon}$ both occur, we have 
$$
J_{\epsilon}+\Delta_n \geq \hat H(\hat{c})\geq \hat H(\hat{c}^*)\geq \inf_{x \in [c_0-\epsilon^{\ast},c_0+\epsilon^{\ast}]} \hat{H}(x) \geq \frac{J_0+3J_{\epsilon}}{4}, 
$$
which implies 
$\Delta_{n} \geq (J_0-J_{\epsilon})/{4}.$
Hence, 
 \begin{eqnarray*}
P(\hat{c} > c_0+\epsilon)&\leq &P\Big(\Delta_{n} \geq \frac{J_0-J_{\epsilon}}{4} \Big)+P(A_{n,\epsilon}^c)
\to 0,
\end{eqnarray*}
where the last step follows from  \eqref{limit.An} and $\Delta_{n} = O_p(n^{-\frac{1}{2}})$. 
This finishes the proof of \eqref{consis1} and the consistency of $\hat c$ stated in \eqref{lemma.result1}. 

For \eqref{lemma.result2}, we note that  
$$
A_{n1,\epsilon}\cap A_{n2,\epsilon}
\subset 
\left\{   \hat{\boldsymbol{\theta}}^T\mathbf{Q}(\hat c) = 0\right\},
$$
which, together with (\ref{limit.An}), implies that 
$$
\lim \limits_{n \to \infty}  P\left(   \hat{\boldsymbol{\theta}}^T\mathbf{Q}(\hat c) = 0\right)=1.
$$
This completes the proof of (\ref{lemma.result2}). 

\end{proof}

\subsection*{A.3 \ Proof of Theorem 1}

We first consider Part (a). 
By (\ref{lemma.result2}) of Lemma 1 and the Slutsky's theorem, 
we can derive the asymptotic normality of $\hat c$ from 
$\hat{\boldsymbol{\theta}}^T\mathbf{Q}(\hat c) = 0$. 
Applying the first-order Taylor expansion on $\mathbf{q}(\hat{c})$ at the point $x=c_0$ and using 
the consistency result of $\hat c$ in \eqref{lemma.result1} of Lemma 1, 
we have 
\begin{equation*}
  0 = \hat{\alpha} + \hat{\boldsymbol{\beta}}^T\mathbf{q}(c_0) + \hat{\boldsymbol{\beta}}^T \dot{\mathbf{q}}(c_0)(\hat{c} - c_0) + o_p(1)\cdot (\hat{c} - c_0).  
\end{equation*}
By Theorem 1 of \cite{cai2018empirical}, we have 
\begin{equation}
\sqrt{n}
(\hat{\boldsymbol{\theta}} - \boldsymbol{\theta}_0) \to {\rm N} (0, \S^{-1}\V\S^{-1})
\end{equation}
 in distribution as $n \to \infty$. This together with the fact $\boldsymbol{\theta}_0^T\mathbf{Q}(c_0) = 0$  implies that 
 \begin{equation*}
  \sqrt{n}(\hat{c}-c_0) =- \frac{\mathbf{Q}^T(c_0) }{\boldsymbol{\beta}_0^T\dot{\mathbf{q}}(c_0)}
  \left\{ \sqrt{n}
(\hat{\boldsymbol{\theta}}- \boldsymbol{\theta}_0)\right\}+ o_p(1) \to {\rm N} (0,\sigma_{c}^2)
\end{equation*}
in distribution as $n \to \infty$,  
where $\sigma_{c}^2$ is defined in \eqref{sigma_cr}.

We next consider Part (b). 
Recall that 
\begin{eqnarray*}
\hat{J} - J_0&=&  \{\hat{F}_{0}(\hat{c}) -\hat{F}_{1}(\hat{c})\} - \{F_0(c_0) - F_1(c_0)\}.
\end{eqnarray*}
Let 
\begin{eqnarray*}
M_{n0}&=&\hat{F}_{0}(c_0) -F_0(c_0),~~
M_{n1}= \hat{F}_{1}(c_0) - F_1(c_0),\\ 
{e}_{n0}&=&\{\hat{F}_{0}(\hat{c}) -\hat{F}_{0}(c)\} -\{F_0(\hat{c}) - F_0(c_0)\},\\
{e}_{n1}&=&\{\hat{F}_{1}(\hat{c}) -\hat{F}_{1}(c)\} -\{F_1(\hat{c}) - F_1(c_0)\},\\
{e}_{n2}&=&\{F_0(\hat{c}) - F_1(\hat{c})\}-\{F_0(c_0) - F_1(c_0)\}. 
\end{eqnarray*}
It can be shown that  
\begin{eqnarray}
\hat{J} - J_0=M_{n0}-M_{n1}+{e}_{n0}+{e}_{n1}+{e}_{n2}. 
\end{eqnarray}
One of the key technical arguments is to show that ${e}_{n0}$, ${e}_{n1}$, and ${e}_{n2}$ are all of order $o_p(n^{-1/2})$. 

By Lemma 4 of \cite{cai2018empirical}, we have 
 for any $b>0$, 
   \begin{eqnarray}
       && \sup \limits_{x:|x-c_0|<bn^{-1/2}}|\{\hat{F}_{0}(x) -\hat{F}_{0}(c_0)\} -\{F_0(x) - F_0(c_0)\}| \nonumber \\
        &=& O_p(n^{-3/4}(\log(n))^{1/2})=o_p(n^{-1/2}).
        \label{result.sup.order}
    \end{eqnarray}
The result in  Part (a) implies that $\hat c-c_0=O_p(n^{-1/2})$, 
which, together with \eqref{result.sup.order}, leads to $e_{n0}=o_p(n^{-1/2})$. 
Similarly, we also have $e_{n1}=o_p(n^{-1/2})$. 
By the second order Taylor expansion and Condition A2, 
we have $e_{n2}=o_p(n^{-1/2})$.
It follows that 
\begin{eqnarray}
\sqrt{n}\left(\hat{J} - J_0\right)=\sqrt{n}(M_{n0}-M_{n1})+o_p(1). 
\end{eqnarray}
Applying Theorem 2 of \cite{cai2018empirical}, we have 
\begin{equation}
   \sqrt{n} \left( \begin{array}{c} M_{n0}\\ M_{n1}\end{array} \right) = \sqrt{n} \left( \begin{array}{c} \hat{F}_{0}(c_0) -F_0(c_0)\\ \hat{F}_{1}(c_0) -F_1(c_0)\end{array} \right) \to {\rm N} \left( \mathbf{0}, \left( \begin{array}{cc}
\sigma^2_{00} &\sigma^2_{01} \\
\sigma^2_{01} & \sigma^2_{11}
\end{array} \right) \right)
\end{equation}
in distribution as $n\to\infty$,  where 
\begin{eqnarray*}
\sigma^2_{00} &=& (1+\rho)\{F_0(c_0) - F_0^2(c_0)\} \\
                        & &  \;\;\;\;\;\;\;\;\; - \rho(1+\rho)\left\{ A_{0}(c_0) - \left( \begin{array}{c}
A_{0}(c_0)\\ \A_{1}(c_0)\end{array} \right)^T \A^{-1} \left(\begin{array}{c}
A_{0}(c_0)\\ \A_{1}(c_0)\end{array}\right) \right\},\\
  \sigma^2_{01}&=&(1+\rho)\left\{ A_{0}(c_0) - \left( \begin{array}{c}
A_{0}(c_0)\\ \A_{1}(c_0)\end{array} \right)^T \A^{-1} \left(\begin{array}{c}
A_{0}(c_0)\\ \A_{1}(c_0)\end{array}\right) \right\},\\
    \sigma^2_{11} &=& \frac{1+\rho}{\rho}\{F_1(c_0) - F_1^2(c)\} \\
                            & & \;\;\;\;\;\;\;\;\;\;\; - \frac{1+\rho}{\rho}\left\{ A_{0}(c_0) - \left( \begin{array}{c}
A_{0}(c_0)\\ \A_{1}(c_0)\end{array} \right)^T \A^{-1} \left(\begin{array}{c}
A_{0}(c_0)\\ \A_{1}(c_0)\end{array} \right) \right\}.
\end{eqnarray*}

It immediately follows that,  as $n\to\infty$, 
$$\sqrt{n}\left(M_{n0}-M_{n1}\right) \to {\rm N} (0, \sigma_{J}^2)$$
in distribution, where $\sigma_{J}^2$ is defined in \eqref{sigma_Jr}.
Recall that  $\sqrt{n} (\hat{J}_r - J_0) = \sqrt{n}\left(M_{n0}-M_{n1}\right)+o_p(1)$. 
By the Slusky's theorem,  we have
$$\sqrt{n} (\hat{J} - J_0) \to {\rm N} (0, \sigma_{J}^2)$$
in distribution as $n\to\infty$. This completes the proof of the theorem.


%

\newpage

{\centering {\large {\bf Supplementary Material for \\
``Semiparametric Inference of the Youden Index and the Optimal Cutoff Point under  Density Ratio Models"}} \par}

%

\bigskip

\no
Section S1 presents more details about the simulation settings.
Section S2 presents additional results for the gamma distributional setting in Section 3.1 of the main paper with the lower limit of detection (LLOD) equal to the 30\% quantile of the population $F_0$. 
Section S3 presents the  results for the lognormal distributional setting.

\section*{S1. \ Simulation Settings}
Recall that in the main paper, we consider the following two distributional settings in the simulation studies: 
\begin{itemize}
    \item [] (1) $f_0  \sim {\rm Gamma}(2,0.5)~{\rm and}~f_1  \sim {\rm Gamma}(2,\eta);$
    \item [] (2) $f_0  \sim {\rm LN}(2.5,0.09)~{\rm and} ~f_1  \sim {\rm LN}(\mu,0.25).$
\end{itemize}
For each distributional setting, we choose four values of $\eta$ or $\mu$  such that the corresponding 
Youden indexes equal  0.2, 0.4, 0.6, and 0.8. 
Table \ref{Settings} provides the exact values of $\eta$ or $\mu$ and the optimal cutoff points $c$ along with 
$F_0(c)$ and $F_1(c)$.

\begin{table}[!htt]
    \centering
    \small
    \caption{Parameter values in simulation studies}
    \label{Settings}
    \begin{tabular}{cccccc}
    \hline
     Distribution & $J$   &  $\eta/\mu$& c & $F_0(c)$& $F_1(c)$\\
     \hline
    \multirow{4}{*}{Gamma}& 0.20&0.34&4.79&0.69&0.49\\
    &0.40&0.23& 5.75&0.78&0.38\\
    & 0.60&0.14& 7.02&0.86&0.26\\
    & 0.80& 0.07& 9.04&0.94&0.14\\\hline
    \multirow{4}{*}{Lognormal}&0.2&2.62&16.92&0.86&0.66\\
     &0.40&2.87& 16.54&0.85&0.45\\
     &0.60&3.14& 17.30&0.88&0.28\\
     &0.80& 3.501&19.12&0.93&0.13\\
     \hline
    \end{tabular}
\end{table}

For each scenario, in addition to the case that there is no LLOD, we also consider 
two cases with fixed and finite LLODs, which are equal to 15\% quantile of $F_0$ and 
30\% quantile of $F_0$, respectively. 
The exact values of the LLODs are given in Table \ref{valLLOD}. 

\begin{table}[!htt]
    \centering
    \small
    \caption{The exact values of finite and finite LLODs}
    \label{valLLOD}
    \begin{tabular}{ccc}
    \hline
     Distribution &15\% quantile of $F_0$& 30\% quantile of $F_0$\\
     \hline
    Gamma &1.37&2.19\\
    Lognormal & 8.93 & 10.41\\
    \hline
    \end{tabular}
\end{table}

\section*{S2. \ Additional Simulation For Gamma Distributional Setting}

Tables~\ref{point.J.lod3}--\ref{point.c.lod3} compare the relative biases (RBs) and mean squared errors (MSEs) of point estimators of $(J,c)$ under gamma setting when the LLOD equals 30\% quantile of $F_0$.
Tables~\ref{CI.J.lod3}--\ref{CI.c.lod3} present the coverage probabilities (CPs) and average lengths (ALs) of the confidence intervals (CIs) of $(J,c)$ under the same setting. 
The general trend for comparing our proposed method and all candidate methods is similar to the case when the LLOD is equal to the 15\% quantile of $F_0$. Hence, we omit the comparison results here. 

\begin{table}[!htt]
\centering
\small
\caption{RB (\%) and MSE ($\times 100$) for point estimators of $J$ when the LLOD equals 30\% quantile of $F_0$} \label{point.J.lod3}
\begin{tabular}{cccccccccccc}
  \hline
  &$(n_0, n_1)$& \multicolumn{2}{c}{$(50, 50)$} & \multicolumn{2}{c}{$(100, 100)$} & \multicolumn{2}{c}{$(200, 200)$}& \multicolumn{2}{c}{$(50, 150)$} & \multicolumn{2}{c}{$(150, 50)$}\\
  \hline
  $J$& & RB & MSE & RB & MSE& RB & MSE& RB & MSE& RB & MSE\\ \hline
\multirow{4}{*}{$0.2$}& $\hat{J}$ & 8.62 & 0.64 & 3.78 & 0.31 & 1.71 & 0.14 & 6.89 & 0.41 & 4.69 & 0.41 \\ 
  &$\hat{J}_{B}$ & 10.52 & 0.65 & 5.12 & 0.32 & 2.94 & 0.16 & 8.24 & 0.42 & 6.35 & 0.43 \\ 
  &$\hat{J}_{G}$ & 5.85 & 0.80 & 0.77 & 0.40 & -2.84 & 0.20 & 4.31 & 0.53 & 0.83 & 0.57 \\ 
  &$\hat{J}_{E}$ & 39.99 & 1.29 & 26.6 & 0.64 & 17.72 & 0.30 & 32.97 & 0.86 & 32.52 & 0.89 \\ \hline
  \multirow{4}{*}{$0.4$}& $\hat{J}$ & 3.16 & 0.58 & 1.36 & 0.30 & 0.62 & 0.13 & 2.25 & 0.37 & 1.55 & 0.40 \\ 
  &$\hat{J}_{B}$ & 5.06 & 0.59 & 2.57 & 0.29 & 1.67 & 0.14 & 3.74 & 0.37 & 3.01 & 0.41 \\ 
  &$\hat{J}_{G}$ & -0.34 & 0.73 & -2.06 & 0.38 & -3.18 & 0.19 & -1.13 & 0.47 & -2.07 & 0.53 \\ 
  &$\hat{J}_{E}$ & 16.42 & 1.06 & 10.62 & 0.53 & 6.78 & 0.24 & 13.36 & 0.68 & 12.57 & 0.74 \\ \hline
  \multirow{4}{*}{$0.6$}& $\hat{J}$ & 1.81 & 0.46 & 0.66 & 0.24 & 0.33 & 0.11 & 1.11 & 0.28 & 0.89 & 0.35 \\ 
  &$\hat{J}_{B}$ & 3.14 & 0.44 & 1.64 & 0.23 & 1.11 & 0.11 & 2.16 & 0.27 & 2.03 & 0.33 \\ 
  &$\hat{J}_{G}$ & -1.12 & 0.56 & -1.91 & 0.30 & -2.42 & 0.15 & -1.72 & 0.37 & -1.90 & 0.42 \\ 
  &$\hat{J}_{E}$ & 8.89 & 0.76 & 5.38 & 0.37 & 3.70 & 0.19 & 7.14 & 0.48 & 6.73 & 0.56 \\ \hline
  \multirow{4}{*}{$0.8$}& $\hat{J}$ & 1.15 & 0.27 & 0.42 & 0.14 & 0.17 & 0.07 & 0.62 & 0.14 & 0.64 & 0.21 \\ 
  &$\hat{J}_{B}$ & 1.55 & 0.22 & 0.78 & 0.12 & 0.53 & 0.05 & 0.98 & 0.13 & 1.00 & 0.17 \\ 
  &$\hat{J}_{G}$ & -1.17 & 0.36 & -1.27 & 0.18 & -1.50 & 0.10 & -1.46 & 0.22 & -1.21 & 0.27 \\ 
  &$\hat{J}_{E}$ & 4.51 & 0.39 & 2.86 & 0.20 & 1.94 & 0.10 & 3.73 & 0.24 & 3.52 & 0.30 \\ \hline
  \end{tabular}
\end{table}
  
\begin{table}[!htt]
\centering
\small
\caption{RB (\%) and MSE ($\times100$) for point estimators of $c$ when the LLOD equals 30\% quantile of $F_0$}
\label{point.c.lod3}
\begin{tabular}{cccccccccccc}
  \hline
 &$(n_0, n_1)$& \multicolumn{2}{c}{$(50, 50)$} & \multicolumn{2}{c}{$(100, 100)$} & \multicolumn{2}{c}{$(200, 200)$}& \multicolumn{2}{c}{$(50, 150)$} & \multicolumn{2}{c}{$(150, 50)$}\\
   \hline
  $J$& & RB & MSE & RB & MSE& RB & MSE& RB & MSE& RB & MSE\\ \hline
 \multirow{4}{*}{$0.2$}& $\hat{c}$ & 1.72 & 166.59 & -0.14 & 76.36 & -1.11 & 25.61 & -0.17 & 100.80 & 0.16 & 99.95 \\ 
  &$\hat{c}_{B}$ & 1.22 & 120.35 & 1.62 & 64.03 & 0.53 & 30.39 & 1.06 & 84.80 & 1.17 & 83.72 \\ 
  &$\hat{c}_{G}$ & 18.38 & 241.37 & 20.51 & 170.02 & 19.84 & 126.64 & 19.37 & 187.87 & 22.02 & 267.68 \\ 
  &$\hat{c}_{E}$  & 0.21 & 246.68 & 0.65 & 169.08 & -0.76 & 116.80 & -0.59 & 204.52 & 2.51 & 215.93 \\ \hline
 \multirow{4}{*}{$0.4$}& $\hat{c}$ & -1.39 & 40.03 & -0.42 & 18.20 & -0.11 & 7.98 & -1.23 & 25.73 & -0.24 & 25.09 \\ 
  &$\hat{c}_{B}$ & 0.14 & 56.14 & 0.33 & 29.60 & 0.10 & 13.84 & -0.47 & 38.37 & 0.81 & 35.86 \\ 
  &$\hat{c}_{G}$ & 11.39 & 121.93 & 14.37 & 109.64 & 14.58 & 91.16 & 13.27 & 108.65 & 14.72 & 115.49 \\ 
  &$\hat{c}_{E}$  & -2.89 & 160.26 & -0.23 & 126.83 & -0.45 & 75.26 & -2.65 & 149.91 & 1.53 & 156.36 \\ \hline
 \multirow{4}{*}{$0.6$}& $\hat{c}$ & -0.59 & 35.72 & -0.27 & 17.64 & -0.04 & 8.36 & -0.81 & 24.8 & -0.01 & 22.01 \\ 
  &$\hat{c}_{B}$ & -0.32 & 51.67 & -0.27 & 26.14 & -0.20 & 12.4 & -1.09 & 33.34 & 0.40 & 30.83 \\ 
  &$\hat{c}_{G}$ & 6.94 & 123.88 & 9.08 & 90.88 & 10.55 & 78.96 & 8.81 & 93.44 & 9.86 & 98.53 \\ 
  &$\hat{c}_{E}$  & -2.29 & 161.37 & -0.67 & 118.08 & 0.20 & 74.7 & -2.40 & 147.73 & 1.22 & 146.64 \\ \hline
  \multirow{4}{*}{$0.8$}& $\hat{c}$ & -0.52 & 72.34 & -0.34 & 36.15 & 0.01 & 17.71 & -0.95 & 53.74 & 0.17 & 41.95 \\ 
  &$\hat{c}_{B}$ & -0.33 & 69.75 & -0.29 & 34.76 & -0.04 & 16.50 & -1.12 & 49.95 & 0.50 & 39.56 \\ 
  &$\hat{c}_{G}$ & 1.13 & 200.79 & 4.49 & 108.25 & 6.35 & 85.81 & 5.11 & 124.86 & 4.12 & 117.41 \\ 
  &$\hat{c}_{E}$  & -3.18 & 236.72 & -1.72 & 159.92 & -1.25 & 107.68 & -4.03 & 195.3 & 1.14 & 197.76 \\ \hline
  \end{tabular}
\end{table}

\begin{table}[!htt]
\centering
\small
\caption{CP (\%) and AL for CIs of $J$ when the LLOD equals 30\% quantile of $F_0$}
\label{CI.J.lod3}
\begin{tabular}{cccccccccccc}
  \hline
 &$(n_0, n_1)$& \multicolumn{2}{c}{$(50, 50)$} & \multicolumn{2}{c}{$(100, 100)$} & \multicolumn{2}{c}{$(200, 200)$}& \multicolumn{2}{c}{$(50, 150)$} & \multicolumn{2}{c}{$(150, 50)$}\\
  \hline
   $J$& & CP & AL & CP & AL& CP & AL& CP & AL& CP & AL\\\hline
  \multirow{4}{*}{$0.2$}& $\mathcal{I}_J$  & 92.1 & 0.31 & 93.6 & 0.21 & 96.0 & 0.15 & 92.8 & 0.24 & 93.7 & 0.25 \\ 
  &$\mathcal{I}_{JB}$ & 93.8 & 0.30 & 94.1 & 0.21 & 95.3 & 0.15 & 92.8 & 0.24 & 94.6 & 0.25 \\ 
  &$\mathcal{I}_{JG}$  & 93.2 & 0.33 & 95.2 & 0.24 & 95.3 & 0.17 & 93.6 & 0.27 & 94.0 & 0.27 \\ 
  &$\mathcal{I}_{JE}$ & 55.5 & 0.25 & 59.9 & 0.19 & 67.3 & 0.14 & 52.9 & 0.21 & 55.6 & 0.21 \\ \hline
  \multirow{4}{*}{$0.4$}& $\mathcal{I}_J$  & 95.0 & 0.28 & 95.1 & 0.20 & 95.8 & 0.15 & 93.8 & 0.23 & 94.28 & 0.24 \\ 
  &$\mathcal{I}_{JB}$ & 95.6 & 0.29 & 93.7 & 0.20 & 95.4 & 0.14 & 92.9 & 0.22 & 94.08 & 0.24 \\ 
  &$\mathcal{I}_{JG}$  & 93.8 & 0.33 & 94.3 & 0.23 & 94.9 & 0.16 & 94.3 & 0.26 & 93.18 & 0.27 \\ 
  &$\mathcal{I}_{JE}$ & 72.1 & 0.26 & 75.0 & 0.19 & 78.8 & 0.14 & 69.2 & 0.21 & 73.22 & 0.23 \\ \hline
  \multirow{4}{*}{$0.6$}& $\mathcal{I}_J$  & 95.6 & 0.26 & 94.5 & 0.18 & 95.28 & 0.13 & 94.3 & 0.20 & 93.79 & 0.22 \\ 
  &$\mathcal{I}_{JB}$ & 95.8 & 0.25 & 94.1 & 0.18 & 95.08 & 0.13 & 94.2 & 0.19 & 94.29 & 0.22 \\ 
  &$\mathcal{I}_{JG}$  & 94.8 & 0.29 & 94.8 & 0.20 & 94.57 & 0.14 & 94.1 & 0.22 & 93.29 & 0.25 \\ 
  &$\mathcal{I}_{JE}$ & 76.0 & 0.24 & 78.1 & 0.18 & 81.61 & 0.13 & 72.8 & 0.19 & 77.35 & 0.21 \\ \hline
  \multirow{4}{*}{$0.8$}& $\mathcal{I}_J$  & 95.5 & 0.20 & 94.6 & 0.14 & 95.18 & 0.10 & 95.3 & 0.14 & 95.48 & 0.18 \\ 
  &$\mathcal{I}_{JB}$ & 95.8 & 0.18 & 95.0 & 0.13 & 95.08 & 0.09 & 94.9 & 0.14 & 95.58 & 0.17 \\ 
  &$\mathcal{I}_{JG}$  & 94.4 & 0.23 & 93.8 & 0.16 & 94.18 & 0.11 & 94.1 & 0.16 & 94.18 & 0.20 \\ 
  &$\mathcal{I}_{JE}$ & 80.2 & 0.18 & 84.7 & 0.14 & 84.94 & 0.10 & 73.7 & 0.13 & 81.22 & 0.16 \\ 
  \hline
\end{tabular}
\end{table}

\begin{table}[!htt]
\centering
\small
\caption{CP (\%) and AL for CIs of $c$ when the LLOD equals 30\% quantile of $F_0$}
\label{CI.c.lod3}
\begin{tabular}{cccccccccccc}
  \hline
 &$(n_0, n_1)$& \multicolumn{2}{c}{$(50, 50)$} & \multicolumn{2}{c}{$(100, 100)$} & \multicolumn{2}{c}{$(200, 200)$}& \multicolumn{2}{c}{$(50, 150)$} & \multicolumn{2}{c}{$(150, 50)$}\\
  \hline
   $J$& & CP & AL & CP & AL& CP & AL& CP & AL& CP & AL\\\hline 
  \multirow{4}{*}{$0.2$}& $\mathcal{I}_c$& 96.6 & 4.37 & 95.3 & 3.27 & 96.4 & 2.02 & 97.7 & 4.67 & 94.7 & 4.42 \\ 
  &$\mathcal{I}_{cB}$ & 81.5 & 4.53 & 81.3 & 2.14 & 83.7 & 1.53 & 70.4 & 2.19 & 85.1 & 2.69 \\ 
   &$\mathcal{I}_{cG}$ & 89.8 & 5.34 & 75.7 & 3.59 & 63.6 & 2.42 & 80.9 & 4.42 & 80.6 & 4.17 \\ 
  &$\mathcal{I}_{cE}$  & 95.9 & 4.72 & 94.5 & 4.13 & 95.5 & 3.55 & 93.8 & 4.25 & 95.0 & 4.52 \\ \hline
  \multirow{4}{*}{$0.4$}& $\mathcal{I}_c$& 95.4 & 2.37 & 94.7 & 1.61 & 95.7 & 1.12 & 94.8 & 1.86 & 95.6 & 1.91 \\ 
  &$\mathcal{I}_{cB}$ & 84.4 & 2.19 & 84.0 & 1.57 & 86.8 & 1.13 & 75.4 & 1.47 & 89.1 & 1.94 \\ 
   &$\mathcal{I}_{cG}$ & 89.5 & 3.39 & 75.2 & 2.45 & 51.2 & 1.75 & 78.1 & 2.68 & 73.8 & 2.49 \\ 
  &$\mathcal{I}_{cE}$  & 94.1 & 4.27 & 95.4 & 3.66 & 95.1 & 3.09 & 93.0 & 3.78 & 96.4 & 4.18 \\ \hline
  \multirow{4}{*}{$0.6$}& $\mathcal{I}_c$& 92.4 & 2.28 & 94.3 & 1.62 & 95.1 & 1.14 & 92.9 & 1.90 & 94.8 & 1.83 \\ 
  &$\mathcal{I}_{cB}$ & 89.5 & 2.34 & 89.2 & 1.67 & 91.1 & 1.19 & 83.9 & 1.65 & 93.3 & 1.98 \\ 
   &$\mathcal{I}_{cG}$ & 93.7 & 3.67 & 86.4 & 2.68 & 69.5 & 1.92 & 87.1 & 2.80& 83.4 & 2.59 \\ 
  &$\mathcal{I}_{cE}$  & 94.4 & 4.31 & 95.1 & 3.68 & 96.4 & 3.03 & 91.8 & 3.82 & 94.6 & 4.14 \\ \hline
   \multirow{4}{*}{$0.8$}& $\mathcal{I}_c$& 93.0 & 3.25 & 94.0 & 2.31 & 95.0 & 1.65 & 93.0 & 2.78 & 94.28 & 2.51 \\ 
  &$\mathcal{I}_{cB}$ & 92.4 & 3.01 & 93.0 & 2.14 & 93.47 & 1.52 & 89.9 & 2.43 & 93.88 & 2.35 \\ 
   &$\mathcal{I}_{cG}$ & 94.9 & 5.43 & 94.9 & 3.86 & 89.46 & 2.75 & 94.2 & 3.90 & 94.58 & 3.44 \\ 
  &$\mathcal{I}_{cE}$  & 88.5 & 4.88 & 93.7 & 4.24 & 94.28 & 3.53 & 84.7 & 4.34 & 94.48 & 4.73 \\\hline 
\end{tabular}
\end{table}

\clearpage
\section*{S3. \ Additional Simulation For Lognormal Distributional Setting}
In this section, we present the simulation results under the lognormal distributional setting. 
Tables~\ref{point.J.log}--\ref{CI.c.log} provide the simulation results of the point estimators and CIs of $(J,c)$  when there is no LLOD.
Tables~\ref{point.J.log.lod}--\ref{CI.c.log.lod} summarize the simulation results of the point estimators and CIs of $(J,c)$ when the LLOD equals 15\% quantile of $F_0$.
Tables~\ref{point.J.log.lod3}--\ref{CI.c.log.lod3} summarize the simulation results of the point estimators and CIs of $(J,c)$ when the LLOD equals 30\% quantile of $F_0$.
We only summarize the comparison results between our proposed method and the Box-Cox method. 
The general trend for comparing our method and other candidate methods is similar to the gamma distributional setting. Hence
we omit their comparison. 

First, we discuss the point estimators of $(J,c)$. For estimating the Youden index, the RBs and MSEs of the estimators $\hat{J}$ and $\hat{J}_B$ are very close and small in majority cases. 
For estimating the optimal cutoff point, the estimator $\hat{c}_B$ is uniformly better than our estimator  in terms of MSE. 
This is expected because the parametric assumption for the Box-Cox method is satisfied.

Next, we discuss the findings for the CIs of $(J,c)$. In general, the ALs of both $\mathcal{I}_{J}$ and $\mathcal{I}_{JB}$ are comparable and small, while both CIs encounter slight overcoverage in some cases especially in the cases that one of the sample sizes is small. 
The performance of the CI $\mathcal{I}_{c}$ is stable with short ALs and reasonable CPs when there is no LLOD or when the LLOD equals 15\% quantile of $F_0$. When the LLOD increases to 30\% quantile of $F_0$, the CI $\mathcal{I}_{c}$ tends to have  undercoverage and longer AL especially in the cases when one of small sample sizes is small or when the Youden index is small. 
When there is no LLOD, the CI $\mathcal{I}_{cB}$ has similar performance as $\mathcal{I}_{c}$. 
However, with the existence of a fixed and finite LLOD,  the CI 
$\mathcal{I}_{cB}$ experiences severe undercoverage when $J = 0.2$ and $0.4$. 
Consequently, the CPs of $\mathcal{I}_{cB}$ are much worse than those of $\mathcal{I}_{c}$ in those cases.

\begin{table}[!htt]
\centering
\small
\caption{RB (\%) and MSE ($\times100$) for point estimators of $J$ when there is no LLOD \label{point.J.log}}
\begin{tabular}{cccccccccccc}
  \hline
 &$(n_0, n_1)$& \multicolumn{2}{c}{$(50, 50)$} & \multicolumn{2}{c}{$(100, 100)$} & \multicolumn{2}{c}{$(200, 200)$}& \multicolumn{2}{c}{$(50, 150)$} & \multicolumn{2}{c}{$(150, 50)$}\\
  \hline
  $J$& & RB & MSE & RB & MSE& RB & MSE& RB & MSE& RB & MSE\\ \hline
   \multirow{6}{*}{$0.2$}& $\hat{J}$ & 5.37 & 0.48 & 3.22 & 0.25 & 0.83 & 0.14 & 4.15 & 0.26 & 3.34 & 0.41 \\ 
   &$\hat{J}_{B}$ & 2.80 & 0.46 & 1.61 & 0.23 & 0.19 & 0.13 & 3.03 & 0.25 & 0.31 & 0.38 \\ 
   &$\hat{J}_{G}$ & 2.10 & 0.48 & 1.42 & 0.25 & -0.27 & 0.14 & 1.84 & 0.26 & 2.14 & 0.42 \\ 
   &$\hat{J}_{E}$ & 29.91 & 0.88 & 20.20 & 0.46 & 12.17 & 0.22 & 24.49 & 0.54 & 24.86 & 0.69 \\ 
   &$\hat{J}_{K}$ & 8.50 & 0.49 & 6.00 & 0.28 & 3.12 & 0.16 & 5.59 & 0.29 & 8.20 & 0.43 \\ 
   &$\hat{J}_{H}$ & 9.48 & 0.60 & 5.96 & 0.29 & 2.89 & 0.16 & 7.85 & 0.34 & 6.54 & 0.48 \\ \hline
  \multirow{6}{*}{$0.4$}& $\hat{J}$ & 2.87 & 0.55 & 1.55 & 0.26 & 0.27 & 0.15 & 1.99 & 0.29 & 1.36 & 0.44 \\ 
  &$\hat{J}_{B}$ & 2.60 & 0.52 & 1.38 & 0.25 & 0.23 & 0.14 & 1.90 & 0.28 & 0.91 & 0.40 \\ 
  &$\hat{J}_{G}$ & 1.06 & 0.53 & 0.54 & 0.25 & -0.27 & 0.15 & 0.55 & 0.28 & 0.64 & 0.44 \\ 
  &$\hat{J}_{E}$ & 14.10 & 0.91 & 9.31 & 0.43 & 5.54 & 0.23 & 11.46 & 0.54 & 10.99 & 0.67 \\ 
  &$\hat{J}_{K}$ & 2.08 & 0.49 & 1.18 & 0.25 & 0.17 & 0.16 & 1.03 & 0.29 & 1.72 & 0.40 \\ 
  &$\hat{J}_{H}$ & 2.12 & 0.64 & 2.01 & 0.31 & 0.97 & 0.18 & 3.04 & 0.35 & 1.29 & 0.50 \\ \hline
  \multirow{6}{*}{$0.6$}& $\hat{J}$ & 1.90 & 0.45 & 0.92 & 0.21 & 0.20 & 0.12 & 1.15 & 0.24 & 0.76 & 0.35 \\ 
  &$\hat{J}_{B}$ & 2.18 & 0.41 & 1.10 & 0.19 & 0.26 & 0.11 & 1.35 & 0.22 & 0.95 & 0.31 \\ 
  &$\hat{J}_{G}$ & 0.32 & 0.43 & 0.13 & 0.21 & -0.26 & 0.12 & -0.11 & 0.23 & 0.23 & 0.36 \\ 
  &$\hat{J}_{E}$ & 7.97 & 0.69 & 5.16 & 0.34 & 3.16 & 0.17 & 6.48 & 0.41 & 5.91 & 0.51 \\ 
  &$\hat{J}_{K}$ & -1.15 & 0.37 & -1.38 & 0.19 & -1.43 & 0.12 & -1.21 & 0.23 & -1.60 & 0.29 \\ 
  &$\hat{J}_{H}$ & 0.65 & 0.52 & 0.26 & 0.25 & -0.38 & 0.14 & 0.84 & 0.28 & 0.05 & 0.43 \\ \hline
  \multirow{6}{*}{$0.8$}& $\hat{J}$ & 1.42 & 0.25 & 0.64 & 0.12 & 0.18 & 0.07 & 0.72 & 0.13 & 0.63 & 0.19 \\ 
  &$\hat{J}_{B}$ & 1.36 & 0.21 & 0.66 & 0.10 & 0.19 & 0.05 & 0.78 & 0.11 & 0.61 & 0.16 \\ 
  &$\hat{J}_{G}$ & -0.12 & 0.26 & -0.24 & 0.13 & -0.31 & 0.07 & -0.59 & 0.14 & 0.03 & 0.20 \\ 
  &$\hat{J}_{E}$ & 4.61 & 0.39 & 2.91 & 0.19 & 1.86 & 0.10 & 3.70 & 0.23 & 3.20 & 0.28 \\ 
  &$\hat{J}_{K}$ & -3.27 & 0.28 & -3.02 & 0.16 & -2.61 & 0.10 & -2.56 & 0.17 & -3.77 & 0.25 \\ 
  &$\hat{J}_{H}$ & 1.38 & 0.38 & 1.00 & 0.19 & 0.35 & 0.10 & 1.04 & 0.19 & 0.92 & 0.31 \\ 
   \hline
  \end{tabular}
\end{table}

\begin{table}[!htt]
\centering
\small
\caption{RB (\%) and MSE ($\times100$) for point estimators of $c$ when there is no LLOD \label{point.c.log}}
\begin{tabular}{cccccccccccc}
  \hline
&$(n_0, n_1)$& \multicolumn{2}{c}{$(50, 50)$} & \multicolumn{2}{c}{$(100, 100)$} & \multicolumn{2}{c}{$(200, 200)$}& \multicolumn{2}{c}{$(50, 150)$} & \multicolumn{2}{c}{$(150, 50)$}\\
  \hline
  $J$& & RB & MSE & RB & MSE& RB & MSE& RB & MSE& RB & MSE\\ \hline
  \multirow{6}{*}{$0.2$}& $\hat{c}$ & 0.20 & 165.43 & 0.10 & 74.31 & 0.22 & 38.72 & -0.08 & 106.52 & 0.58 & 92.00 \\ 
  &$\hat{c}_{B}$ & -0.57 & 150.88 & -0.31 & 64.87 & 0.00 & 32.17 & -0.39 & 99.33 & -0.10 & 72.64 \\ 
  &$\hat{c}_{G}$ & -1.93 & 249.45 & -0.76 & 100.65 & -0.30 & 51.49 & -1.29 & 130.09 & -0.21 & 99.08 \\ 
  &$\hat{c}_{E}$ & -2.90 & 547.53 & -1.50 & 343.76 & -0.34 & 228.25 & -3.14 & 448.08 & -0.34 & 448.97 \\ 
  &$\hat{c}_{K}$ & 2.86 & 480.18 & 1.95 & 253.76 & 1.87 & 155.61 & 2.65 & 404.36 & 2.33 & 285.09 \\ 
  &$\hat{c}_{H}$ & 1.98 & 300.68 & 0.09 & 225.39 & 0.21 & 110.75 & -0.32 & 298.09 & 0.99 & 258.55 \\ \hline
 \multirow{6}{*}{$0.4$}& $\hat{c}$ & -0.26 & 86.21 & -0.05 & 41.36 & -0.02 & 20.08 & -0.16 & 56.25 & 0.14 & 53.16 \\ 
 &$\hat{c}_{B}$ & -0.85 & 77.38 & -0.38 & 38.58 & -0.18 & 18.08 & -0.43 & 53.77 & -0.42 & 42.94 \\ 
 &$\hat{c}_{G}$ & -2.17 & 159.00 & -1.10 & 71.71 & -0.54 & 36.52 & -1.19 & 80.69 & -0.51 & 72.77 \\ 
  &$\hat{c}_{E}$ & -1.58 & 347.41 & -0.57 & 242.58 & -0.44 & 151.00 & -1.55 & 285.24 & 0.25 & 299.00 \\ 
  &$\hat{c}_{K}$ & 3.33 & 211.66 & 2.62 & 129.96 & 2.10 & 75.34 & 2.67 & 187.62 & 3.06 & 130.84 \\ 
  &$\hat{c}_{H}$ & 0.79 & 279.02 & 0.60 & 141.91 & 0.47 & 66.56 & 0.38 & 181.29 & 1.00 & 178.57 \\ \hline
  \multirow{6}{*}{$ 0.6$}& $\hat{c}$ & -0.66 & 76.57 & -0.29 & 36.89 & -0.20 & 17.93 & -0.43 & 48.70 & -0.24 & 47.57 \\ 
  &$\hat{c}_{B}$ & -0.56 & 67.27 & -0.25 & 32.86 & -0.19 & 16.09 & -0.38 & 45.19 & -0.22 & 36.87 \\ 
  &$\hat{c}_{G}$ & -2.46 & 174.05 & -1.07 & 82.60 & -0.60 & 37.54 & -1.15 & 80.66 & -0.82 & 77.85 \\ 
  &$\hat{c}_{E}$ & -1.20 & 282.61 & -0.63 & 183.82 & -0.09 & 115.00 & -1.50 & 237.71 & 0.39 & 230.26 \\ 
  &$\hat{c}_{K}$ & 4.32 & 200.28 & 3.44 & 112.79 & 2.63 & 66.22 & 3.32 & 154.10 & 3.89 & 116.61 \\ 
  &$\hat{c}_{H}$ & 0.57 & 244.48 & 1.43 & 132.73 & 0.97 & 57.63 & 1.00 & 155.92 & 1.20 & 181.91 \\ \hline
  \multirow{6}{*}{$0.8$}& $\hat{c}$ & -0.88 & 99.18 & -0.56 & 48.62 & -0.41 & 24.13 & -0.84 & 65.16 & -0.48 & 60.61 \\ 
  &$\hat{c}_{B}$ & -0.22 & 70.70 & -0.13 & 34.19 & -0.16 & 16.96 & -0.36 & 51.67 & 0.02 & 37.30 \\ 
  &$\hat{c}_{G}$ & -3.48 & 287.68 & -1.58 & 137.17 & -0.77 & 63.57 & -1.01 & 116.65 & -1.94 & 134.35 \\ 
  &$\hat{c}_{E}$ & -1.37 & 291.59 & -0.55 & 189.65 & -0.74 & 128.34 & -1.46 & 232.99 & 0.43 & 233.55 \\ 
  &$\hat{c}_{K}$ & 3.66 & 199.67 & 3.16 & 114.55 & 2.37 & 65.77 & 3.04 & 163.09 & 3.33 & 114.45 \\ 
  &$\hat{c}_{H}$ & -1.78 & 247.56 & -1.45 & 135.38 & -1.38 & 78.60 & -1.95 & 156.22 & -1.49 & 194.25 \\ 
  \hline
\end{tabular}
\end{table}

\begin{table}[!htt]
\centering
\small
\caption{CP (\%) and AL for CIs of $J$ when there is no LLOD  \label{CI.J.log}}
\begin{tabular}{cccccccccccc}
  \hline
 &$(n_0, n_1)$& \multicolumn{2}{c}{$(50, 50)$} & \multicolumn{2}{c}{$(100, 100)$} & \multicolumn{2}{c}{$(200, 200)$}& \multicolumn{2}{c}{$(50, 150)$} & \multicolumn{2}{c}{$(150, 50)$}\\ 
  \hline
   $J$& & CP & AL & CP & AL& CP & AL& CP & AL& CP & AL\\\hline
 \multirow{5}{*}{$0.2$}& $\mathcal{I}_J$& 96.0 & 0.28 & 95.2 & 0.20 & 94.2 & 0.14 & 94.8 & 0.20 & 94.4 & 0.25 \\ 
  &$\mathcal{I}_{JB}$ & 95.2 & 0.27 & 94.0 & 0.19 & 93.8 & 0.13 & 95.8 & 0.20 & 93.8 & 0.23 \\ 
  &$\mathcal{I}_{JG}$ & 95.4 & 0.27 & 93.8 & 0.20 & 92.8 & 0.14 & 95.0 & 0.20 & 94.3 & 0.25 \\ 
  &$\mathcal{I}_{JE}$ & 80.6 & 0.29 & 82.7 & 0.21 & 82.8 & 0.15 & 73.5 & 0.21 & 83.7 & 0.26 \\ 
  &$\mathcal{I}_{JK}$ & 94.9 & 0.27 & 92.7 & 0.20 & 92.5 & 0.14 & 93.3 & 0.20 & 93.4 & 0.24 \\  
  &$\mathcal{I}_{JH}$ & 93.4 & 0.28 & 93 & 0.20 & 94.8 & 0.15 & 90 & 0.21 & 94.8 & 0.25 \\  \hline
  \multirow{5}{*}{$0.4$}& $\mathcal{I}_J$ & 96.2 & 0.28 & 96.0 & 0.20 & 94.0 & 0.15 & 95.2 & 0.21 & 95.2 & 0.25 \\ 
  &$\mathcal{I}_{JB}$ & 95.4 & 0.29 & 95.6 & 0.20 & 93.5 & 0.14 & 95.7 & 0.21 & 95.2 & 0.25 \\ 
  &$\mathcal{I}_{JG}$ & 95.9 & 0.29 & 94.6 & 0.20 & 93.9 & 0.14 & 95.3 & 0.21 & 93.9 & 0.26 \\ 
  &$\mathcal{I}_{JE}$ & 82.3 & 0.29 & 86.0 & 0.22 & 85.4 & 0.16 & 77.7 & 0.22 & 85.7 & 0.27 \\ 
  &$\mathcal{I}_{JK}$ & 94.8 & 0.28 & 94.7 & 0.20 & 92.7 & 0.14 & 94.3 & 0.21 & 93.7 & 0.24 \\
  &$\mathcal{I}_{JH}$ & 94.0 & 0.30 & 94.9 & 0.22 & 95.6 & 0.16 & 95.2 & 5.66 & 94.4 & 0.27 \\ \hline
  \multirow{5}{*}{$0.6$}& $\mathcal{I}_J$ & 95.7 & 0.26 & 96.2 & 0.18 & 94.8 & 0.13 & 95.8 & 0.19 & 95.3 & 0.23 \\ 
  &$\mathcal{I}_{JB}$ & 95.6 & 0.25 & 96.0 & 0.18 & 94.7 & 0.13 & 95.7 & 0.19 & 95.4 & 0.22 \\ 
  &$\mathcal{I}_{JG}$ & 94.3 & 0.26 & 94.9 & 0.18 & 93.7 & 0.13 & 94.4 & 0.19 & 93.9 & 0.23 \\ 
  &$\mathcal{I}_{JE}$ & 84.2 & 0.26 & 83.7 & 0.19 & 85.7 & 0.14 & 77.7 & 0.19 & 84.5 & 0.24 \\ 
  &$\mathcal{I}_{JK}$ & 94.3 & 0.24 & 95.8 & 0.17 & 93.1 & 0.13 & 95.1 & 0.19 & 94.4 & 0.21 \\
  &$\mathcal{I}_{JH}$ & 93.9 & 0.29 & 94.2 & 0.20 & 95.0& 0.14 & 92.1 & 0.20 & 94.7 & 0.27 \\\hline
  \multirow{5}{*}{$0.8$}& $\mathcal{I}_J$ & 96.9 & 0.20 & 95.7 & 0.14 & 94.9 & 0.10 & 96.0 & 0.15 & 94.9 & 0.17 \\ 
  &$\mathcal{I}_{JB}$ & 96.0 & 0.18 & 96.2 & 0.13 & 94.9 & 0.09 & 95.2 & 0.14 & 95.6 & 0.16 \\ 
  &$\mathcal{I}_{JG}$ & 92.8 & 0.20 & 94.5 & 0.14 & 94.6 & 0.10 & 94.0 & 0.14 & 92.5 & 0.18 \\ 
  &$\mathcal{I}_{JE}$ & 82.3 & 0.18 & 85.6 & 0.14 & 86 & 0.10 & 77.7 & 0.14 & 84.5 & 0.17 \\ 
  &$\mathcal{I}_{JK}$ & 94.2 & 0.18 & 92.5 & 0.13 & 88.9 & 0.09 & 94.1 & 0.14 & 91.6 & 0.16 \\ 
  &$\mathcal{I}_{JH}$ & 89.5 & 0.23 & 95.7 & 0.17 & 94.7 & 0.12 & 92.4 & 0.17 & 91.3 & 0.22 \\
  \hline
   \end{tabular}
\end{table}

\begin{table}[!htt]
\centering
\small
\caption{CP (\%) and AL for CIs of $c$ when there is no LLOD \label{CI.c.log}}
\begin{tabular}{cccccccccccc}
  \hline
 &$(n_0, n_1)$& \multicolumn{2}{c}{$(50, 50)$} & \multicolumn{2}{c}{$(100, 100)$} & \multicolumn{2}{c}{$(200, 200)$}& \multicolumn{2}{c}{$(50, 150)$} & \multicolumn{2}{c}{$(150, 50)$}\\
  \hline
   $J$& & CP & AL & CP & AL& CP & AL& CP & AL& CP & AL\\\hline
\multirow{5}{*}{$0.2$}& $\mathcal{I}_c$ & 93.9 & 5.06 & 95.3 & 3.44 & 95.0 & 2.44 & 94.3 & 4.14 & 95.3 & 3.82 \\ 
  &$\mathcal{I}_{cB}$ & 93.7 & 6.57 & 94.3 & 3.33 & 94.7 & 2.26 & 93.4 & 4.12 & 93.5 & 4.56 \\ 
  &$\mathcal{I}_{cG}$ & 93.4 & 6.95 & 94.6 & 4.22 & 95.1 & 2.89 & 93.1 & 4.67 & 95.8 & 4.42 \\ 
  &$\mathcal{I}_{cE}$ & 93.4 & 8.05 & 95.7 & 6.70 & 96.5 & 5.56 & 91.6 & 7.08 & 96.6 & 7.49 \\ 
  &$\mathcal{I}_{cK}$ & 95.5 & 8.95 & 95.0 & 6.48 & 94.8 & 4.67 & 95.6 & 7.53 & 95.2 & 7.21 \\ 
  &$\mathcal{I}_{cH}$ & 94.9 & 8.44 & 96.2 & 5.60 & 94.6 & 4.26 & 93.7 & 6.23 & 95.0 & 7.22 \\ \hline
\multirow{5}{*}{$0.4$}& $\mathcal{I}_c$ & 94.1 & 3.57 & 94.5 & 2.51 & 95.4 & 1.77 & 93.7 & 2.92 & 94.6 & 2.84 \\ 
 &$\mathcal{I}_{cB}$  & 93.8 & 3.36 & 92.9 & 2.37 & 94.6 & 1.68 & 93.7 & 2.84 & 92.8 & 2.52 \\ 
 &$\mathcal{I}_{cG}$ & 92.4 & 4.70 & 93.1 & 3.30 & 94.8 & 2.33 & 94.1 & 3.43 & 94.6 & 3.31 \\ 
  &$\mathcal{I}_{cE}$ & 94.7 & 6.48 & 96.8 & 5.39 & 97.0 & 4.45 & 94.0 & 5.78 & 96.7 & 6.05 \\ 
  &$\mathcal{I}_{cK}$& 94.8 & 5.74 & 93.5 & 4.00 & 93.2 & 3.07 & 94.6 & 4.89 & 92.1 & 4.16 \\ 
  &$\mathcal{I}_{cH}$ & 96.1 & 6.38 & 96.0 & 4.71 & 96.7 & 3.46 & 95.4 & 5.09 & 95.2 & 5.66 \\  \hline
  \multirow{5}{*}{$0.6$}& $\mathcal{I}_c$ & 92.9 & 3.35 & 94.1 & 2.38 & 95.2 & 1.67 & 93.9 & 2.69 & 94.7 & 2.73 \\ 
  &$\mathcal{I}_{cB}$  & 92.9 & 3.09 & 94.6 & 2.21 & 94.3 & 1.57 & 93.6 & 2.60 & 93.7 & 2.34 \\ 
  &$\mathcal{I}_{cG}$ & 91.6 & 4.71 & 92.9 & 3.37 & 93.9 & 2.39 & 92.2 & 3.33 & 92.3 & 3.28 \\ 
  &$\mathcal{I}_{cE}$ & 94.2 & 5.78 & 95.5 & 4.84 & 96.7 & 3.90 & 92.7 & 5.16 & 96.4 & 5.43 \\ 
 &$\mathcal{I}_{cK}$& 91.9 & 4.86 & 90.7 & 3.46 & 91.2 & 2.56 & 92.8 & 4.26 & 88.8 & 3.37 \\
 &$\mathcal{I}_{cH}$ & 96.9 & 6.02 & 96.1 & 4.46 & 96.6 & 3.18 & 94.7 & 4.72 & 98.0 & 5.38 \\  \hline
  \multirow{5}{*}{$0.8$}& $\mathcal{I}_c$ & 93.2 & 3.87 & 94.9 & 2.78 & 96.1 & 1.97 & 94.4 & 3.14 & 95.0 & 3.16 \\ 
  &$\mathcal{I}_{cB}$ & 94.0 & 3.26 & 94.9 & 2.31 & 94.7 & 1.64 & 93.8 & 2.79 & 95.3 & 2.49 \\ 
 &$\mathcal{I}_{cG}$& 90.2 & 5.81 & 92.8 & 4.34 & 93.0 & 3.05 & 92.8 & 4.15 & 89.1 & 3.72 \\ 
  &$\mathcal{I}_{cE}$ & 88.3 & 5.61 & 95.3 & 4.87 & 96.5 & 3.96 & 86.9 & 5.08 & 96.1 & 5.45 \\ 
  &$\mathcal{I}_{cK}$ & 92.8 & 5.12 & 91.6 & 3.59 & 90.2 & 2.68 & 92.3 & 4.39 & 89.4 & 3.43 \\ 
  &$\mathcal{I}_{cH}$& 97.6 & 5.75 & 96.2 & 4.48 & 95.3 & 3.31 & 92.4 & 4.44 & 98.6 & 5.40\\
  \hline
   \end{tabular}
\end{table}


\begin{table}[!htt]
\centering
\small
\caption{RB (\%) and MSE ($\times 100$) for point estimators of $J$ when the LLOD equals 15\% quantile of $F_0$}
\label{point.J.log.lod}
\begin{tabular}{cccccccccccc}
  \hline
 &$(n_0, n_1)$& \multicolumn{2}{c}{$(50, 50)$} & \multicolumn{2}{c}{$(100, 100)$} & \multicolumn{2}{c}{$(200, 200)$}& \multicolumn{2}{c}{$(50, 150)$} & \multicolumn{2}{c}{$(150, 50)$}\\ 
  \hline
  $J$& & RB & MSE & RB & MSE& RB & MSE& RB & MSE& RB & MSE\\ \hline
   \multirow{4}{*}{$0.2$}& $\hat{J}$ & 7.56 & 0.50 & 4.49 & 0.26 & 1.35 & 0.14 & 5.47 & 0.27 & 4.94 & 0.41 \\ 
   &$\hat{J}_{B}$ & 5.45 & 0.48 & 3.00 & 0.24 & 0.80 & 0.13 & 4.20 & 0.26 & 2.02 & 0.38 \\ 
  &$\hat{J}_{G}$  & 34.64 & 1.13 & 33.44 & 0.80 & 30.59 & 0.57 & 33.94 & 0.83 & 34.23 & 1.02 \\ 
  &$\hat{J}_{E}$ & 29.86 & 0.88 & 20.13 & 0.46 & 12.08 & 0.22 & 24.42 & 0.54 & 24.80 & 0.68 \\ \hline
  \multirow{4}{*}{$0.4$}& $\hat{J}$ & 3.76 & 0.56 & 2.21 & 0.27 & 0.62 & 0.15 & 2.71 & 0.30 & 1.92 & 0.45 \\ 
  &$\hat{J}_{B}$ & 3.29 & 0.52 & 1.79 & 0.25 & 0.51 & 0.14 & 2.26 & 0.28 & 1.26 & 0.40 \\ 
  &$\hat{J}_{G}$ & 8.26 & 0.69 & 8.27 & 0.39 & 6.94 & 0.25 & 7.99 & 0.42 & 8.23 & 0.59 \\ 
  &$\hat{J}_{E}$ & 14.11 & 0.91 & 9.35 & 0.44 & 5.61 & 0.23 & 11.48 & 0.54 & 10.90 & 0.67 \\ \hline
  \multirow{4}{*}{$0.6$}& $\hat{J}$ & 2.43 & 0.46 & 1.27 & 0.22 & 0.42 & 0.12 & 1.59 & 0.24 & 1.08 & 0.35 \\ 
  &$\hat{J}_{B}$ & 2.21 & 0.41 & 1.12 & 0.19 & 0.28 & 0.11 & 1.33 & 0.22 & 1.05 & 0.31 \\ 
  &$\hat{J}_{G}$ & 1.04 & 0.48 & 0.96 & 0.24 & 0.26 & 0.13 & 0.65 & 0.27 & 0.79 & 0.39 \\ 
  &$\hat{J}_{E}$ & 7.97 & 0.69 & 5.15 & 0.34 & 3.18 & 0.18 & 6.48 & 0.41 & 5.96 & 0.51 \\ \hline
  \multirow{4}{*}{$0.8$}& $\hat{J}$ & 1.73 & 0.26 & 0.84 & 0.13 & 0.33 & 0.07 & 1.03 & 0.14 & 0.82 & 0.20 \\ 
  &$\hat{J}_{B}$ & 1.27 & 0.21 & 0.63 & 0.10 & 0.17 & 0.06 & 0.73 & 0.12 & 0.68 & 0.16 \\ 
  &$\hat{J}_{G}$ & -0.64 & 0.29 & -0.62 & 0.15 & -0.92 & 0.08 & -1.00 & 0.17 & -0.71 & 0.23 \\ 
  &$\hat{J}_{E}$ & 4.60 & 0.39 & 2.91 & 0.19 & 1.87 & 0.10 & 3.70 & 0.23 & 3.24 & 0.28 \\ \hline
\end{tabular}
\end{table}

\begin{table}[!htt]
\centering
\small
\caption{RB (\%) and MSE ($\times100$) for point estimators of $c$ when the LLOD equals 15\% quantile of $F_0$ \label{point.c.log.lod}}
\begin{tabular}{cccccccccccc}
  \hline
 &$(n_0, n_1)$& \multicolumn{2}{c}{$(50, 50)$} & \multicolumn{2}{c}{$(100, 100)$} & \multicolumn{2}{c}{$(200, 200)$}& \multicolumn{2}{c}{$(50, 150)$} & \multicolumn{2}{c}{$(150, 50)$}\\
  \hline
  $J$& & RB & MSE & RB & MSE& RB & MSE& RB & MSE& RB & MSE\\ \hline
  \multirow{4}{*}{$0.2$}& $\hat{c}$  & -0.17 & 225.03 & -0.22 & 112.48 & 0.09 & 58.83 & -0.54 & 148.20 & 0.37 & 143.62 \\ 
  &$\hat{c}_{B}$ & -0.25 & 164.45 & -0.15 & 75.43 & 0.09 & 37.46 & -0.30 & 108.84 & 0.08 & 88.84 \\ 
  &$\hat{c}_{G}$ & -6.76 & 321.47 & -5.66 & 181.59 & -5.08 & 118.25 & -6.36 & 244.43 & -5.32 & 181.49 \\ 
  &$\hat{c}_{E}$ & -2.94 & 554.67 & -1.49 & 346.36 & -0.38 & 225.67 & -3.20 & 450.15 & -0.39 & 452.45 \\ \hline
  \multirow{4}{*}{$0.4$}& $\hat{c}$ & 0.07 & 116.35 & 0.06 & 61.77 & 0.06 & 30.91 & -0.25 & 81.46 & 0.51 & 80.19 \\ 
  &$\hat{c}_{B}$ & -0.23 & 92.05 & 0.06 & 45.31 & 0.02 & 22.06 & -0.16 & 60.59 & 0.15 & 54.55 \\ 
  &$\hat{c}_{G}$ & -1.49 & 148.7 & -0.54 & 72.78 & 0.02 & 36.59 & -0.84 & 82.54 & 0.04 & 70.48 \\ 
  &$\hat{c}_{E}$ & -1.59 & 346.94 & -0.53 & 244.02 & -0.37 & 150.03 & -1.53 & 285.84 & 0.21 & 300.37 \\ \hline
 \multirow{4}{*}{$0.6$}& $\hat{c}$ & -0.22 & 90.66 & -0.02 & 46.55 & -0.11 & 24.47 & -0.42 & 61.71 & 0.34 & 58.21 \\ 
  &$\hat{c}_{B}$ & -0.06 & 72.49 & 0.05 & 35.93 & -0.06 & 17.92 & -0.18 & 47.64 & 0.16 & 44.06 \\ 
  &$\hat{c}_{G}$ & -0.53 & 151.34 & 0.84 & 79.10 & 1.38 & 41.43 & 0.75 & 80.73 & 1.23 & 75.74 \\ 
  &$\hat{c}_{E}$ & -1.20 & 282.61 & -0.57 & 182.46 & -0.12 & 115.12 & -1.50 & 237.71 & 0.37 & 230.93 \\ \hline
  \multirow{4}{*}{$0.8$}& $\hat{c}$ & -0.41 & 105.32 & -0.23 & 51.60 & -0.26 & 27.89 & -0.68 & 68.33 & 0.09 & 67.62 \\ 
  &$\hat{c}_{B}$ & 0.00 & 74.60 & -0.01 & 36.17 & -0.12 & 17.85 & -0.29 & 53.10 & 0.25 & 41.60 \\ 
  &$\hat{c}_{G}$ & -1.29 & 235.86 & 0.54 & 116.04 & 1.16 & 65.54 & 1.00 & 114.92 & 0.40 & 120.53 \\ 
  &$\hat{c}_{E}$ & -1.37 & 292.03 & -0.55 & 189.9 & -0.77 & 128.77 & -1.46 & 232.99 & 0.44 & 235.25 \\ \hline
\end{tabular}
\end{table}

\begin{table}[!htt]
\centering
\small
\caption{CP (\%) and AL for CIs of $J$ when the LLOD equals 15\% quantile of $F_0$}
\label{CI.J.log.lod}
\begin{tabular}{cccccccccccc}
  \hline
 &$(n_0, n_1)$& \multicolumn{2}{c}{$(50, 50)$} & \multicolumn{2}{c}{$(100, 100)$} & \multicolumn{2}{c}{$(200, 200)$}& \multicolumn{2}{c}{$(50, 150)$} & \multicolumn{2}{c}{$(150, 50)$}\\
  \hline
   $J$& & CP & AL & CP & AL& CP & AL& CP & AL& CP & AL\\\hline
  \multirow{4}{*}{$0.2$}& $\mathcal{I}_J$ & 95.4 & 0.29 & 94.3 & 0.20 & 93.8 & 0.14 & 94.1 & 0.21 & 94.2 & 0.25 \\ 
  &$\mathcal{I}_{JB}$ & 95.5 & 0.28 & 94.5 & 0.20 & 93.7 & 0.14 & 95.3 & 0.20 & 93.8 & 0.25 \\ 
  &$\mathcal{I}_{JG}$ & 86.3 & 0.32 & 76.2 & 0.23 & 65.0 & 0.16 & 75.4 & 0.24 & 82.6 & 0.29 \\ 
  &$\mathcal{I}_{JE}$ & 75.4 & 0.26 & 77.7 & 0.19 & 79.0 & 0.14 & 70.2 & 0.20 & 77.8 & 0.24 \\ \hline
  \multirow{4}{*}{$0.4$}& $\mathcal{I}_J$ & 95.8 & 0.29 & 95.9 & 0.21 & 94.5 & 0.15 & 95.2 & 0.21 & 95.2 & 0.26 \\ 
  &$\mathcal{I}_{JB}$ & 95.8 & 0.29 & 95.4 & 0.21 & 93.6 & 0.14 & 95.6 & 0.21 & 95.0 & 0.26 \\ 
  &$\mathcal{I}_{JG}$ & 92.3 & 0.30 & 89.7 & 0.22 & 86.1 & 0.15 & 88.7 & 0.23 & 89.8 & 0.27 \\ 
  &$\mathcal{I}_{JE}$ & 80.0 & 0.27 & 82.5 & 0.20 & 83.6 & 0.15 & 75.2 & 0.21 & 82.8 & 0.25 \\ \hline
  \multirow{4}{*}{$0.6$}& $\mathcal{I}_J$ & 95.6 & 0.26 & 96.0 & 0.19 & 94.7 & 0.13 & 94.9 & 0.19 & 95.3 & 0.23 \\ 
   &$\mathcal{I}_{JB}$& 95.3 & 0.25 & 95.9 & 0.18 & 94.5 & 0.13 & 95.6 & 0.19 & 96.0 & 0.22 \\ 
  &$\mathcal{I}_{JG}$  & 93.4 & 0.27 & 94.5 & 0.19 & 93.6 & 0.14 & 93.7 & 0.20 & 92.5 & 0.24 \\ 
  &$\mathcal{I}_{JE}$ & 81.4 & 0.25 & 81.7 & 0.18 & 83.5 & 0.14 & 75.9 & 0.19 & 82.3 & 0.23 \\ \hline
  \multirow{4}{*}{$0.8$}& $\mathcal{I}_J$ & 96.8 & 0.20 & 95.6 & 0.14 & 95 & 0.10 & 95.5 & 0.15 & 95.1 & 0.18 \\ 
   &$\mathcal{I}_{JB}$ & 94.9 & 0.18 & 96.2 & 0.13 & 95.2 & 0.09 & 96.1 & 0.14 & 95.1 & 0.16 \\ 
  &$\mathcal{I}_{JG}$  & 94.2 & 0.21 & 95.0 & 0.15 & 95.2 & 0.11 & 95.1 & 0.15 & 93.7 & 0.19 \\ 
  &$\mathcal{I}_{JE}$ & 81.5 & 0.18 & 85.2 & 0.14 & 85.8 & 0.10 & 77.6 & 0.14 & 83.7 & 0.17 \\\hline
   \end{tabular}
\end{table}

\begin{table}[!htt]
\centering
\small
\caption{CP (\%) and AL for CIs of $c$ when the LLOD equals 15\% quantile of $F_0$}\label{CI.c.log.lod}
\begin{tabular}{cccccccccccc}
  \hline
 &$(n_0, n_1)$& \multicolumn{2}{c}{$(50, 50)$} & \multicolumn{2}{c}{$(100, 100)$} & \multicolumn{2}{c}{$(200, 200)$}& \multicolumn{2}{c}{$(50, 150)$} & \multicolumn{2}{c}{$(150, 50)$}\\
  \hline
   $J$& & CP & AL & CP & AL& CP & AL& CP & AL& CP & AL\\\hline
   \multirow{4}{*}{$0.2$}& $\mathcal{I}_c$ & 93.7 & 5.96 & 94.3 & 4.22 & 95.0 & 3.05 & 93.5 & 4.84 & 93.5 & 4.83 \\ 
  &$\mathcal{I}_{cB}$ & 84.6 & 5.35 & 84.1 & 2.50 & 83.5 & 1.70 & 84.9 & 3.20 & 79.1 & 2.95 \\ 
  &$\mathcal{I}_{cG}$ & 82.8 & 5.55 & 81.6 & 3.73 & 71.5 & 2.57 & 81.9 & 4.41 & 82.2 & 4.22 \\ 
  &$\mathcal{I}_{cE}$ & 92.9 & 7.55 & 95.3 & 6.47 & 95.3 & 5.42 & 90.8 & 6.77 & 95.4 & 7.16 \\ \hline
  \multirow{4}{*}{$0.4$}& $\mathcal{I}_c$ & 93.0 & 4.27 & 93.4 & 3.11 & 95.3 & 2.27 & 92.6 & 3.52 & 91.9 & 3.55 \\ 
  &$\mathcal{I}_{cB}$ & 88.8 & 3.03 & 88.9 & 2.12 & 88.7 & 1.51 & 87.1 & 2.35 & 89.2 & 2.41 \\ 
  &$\mathcal{I}_{cG}$ & 94.4 & 4.58 & 93.2 & 3.29 & 94.3 & 2.34 & 93.6 & 3.51 & 94.2 & 3.22 \\ 
  &$\mathcal{I}_{cE}$ & 94.5 & 6.31 & 96.2 & 5.25 & 96.8 & 4.38 & 93.1 & 5.64 & 96.5 & 5.96 \\ \hline
 \multirow{4}{*}{$ 0.6$}& $\mathcal{I}_c$ & 91.7 & 3.72 & 93.2 & 2.72 & 94.5 & 1.97 & 93.5 & 3.04 & 94.3 & 3.12 \\ 
  &$\mathcal{I}_{cB}$ & 91.6 & 3.04 & 92.2 & 2.15 & 92.0 & 1.52 & 90.9 & 2.36 & 93.4 & 2.44 \\ 
  &$\mathcal{I}_{cG}$ & 94.5 & 4.70 & 94.0 & 3.36 & 93.7 & 2.42 & 94.3 & 3.40 & 94.6 & 3.26 \\ 
  &$\mathcal{I}_{cE}$ & 93.9 & 5.74 & 95.3 & 4.79 & 96.5 & 3.88 & 92.3 & 5.10 & 95.8 & 5.40 \\ \hline
  \multirow{4}{*}{$0.8$}& $\mathcal{I}_c$ & 93.6 & 3.97 & 94.8 & 2.91 & 95.0 & 2.10 & 93.5 & 3.23 & 94.1 & 3.33 \\ 
  &$\mathcal{I}_{cB}$ & 94.2 & 3.34 & 95.8 & 2.36 & 95.6 & 1.67 & 94.3 & 2.77 & 94.4 & 2.60 \\ 
  &$\mathcal{I}_{cG}$ & 94.8 & 5.86 & 94.9 & 4.29 & 94.7 & 3.03 & 95.6 & 4.21 & 94.7 & 3.84 \\ 
  &$\mathcal{I}_{cE}$ & 88.6 & 5.57 & 95.1 & 4.84 & 96.0 & 3.94 & 86.5 & 5.06 & 95.5 & 5.48 \\ \hline
   \end{tabular}
\end{table}

\begin{table}[!htt]
\centering
\small
\caption{RB (\%) and MSE ($\times 100$) for point estimators of $J$ when the LLOD equals 30\% quantile of $F_0$}
\label{point.J.log.lod3}
\begin{tabular}{cccccccccccc}
  \hline
 &$(n_0, n_1)$& \multicolumn{2}{c}{$(50, 50)$} & \multicolumn{2}{c}{$(100, 100)$} & \multicolumn{2}{c}{$(200, 200)$}& \multicolumn{2}{c}{$(50, 150)$} & \multicolumn{2}{c}{$(150, 50)$}\\
  \hline
  $J$& & RB & MSE & RB & MSE& RB & MSE& RB & MSE& RB & MSE\\ \hline
  \multirow{4}{*}{$0.2$}& $\hat{J}$ & 8.39 & 0.52 & 4.76 & 0.26 & 1.57 & 0.14 & 6.10 & 0.28 & 5.56 & 0.43 \\ 
  &$\hat{J}_{B}$ & 4.89 & 0.50 & 2.83 & 0.25 & 0.80 & 0.14 & 4.59 & 0.26 & -1.52 & 0.44 \\ 
  &$\hat{J}_{G}$ & 47.93 & 1.70 & 45.32 & 1.25 & 42.55 & 0.94 & 46.74 & 1.32 & 46.29 & 1.50 \\ 
  &$\hat{J}_{E}$ & 29.86 & 0.89 & 20.03 & 0.46 & 12.13 & 0.22 & 24.31 & 0.53 & 24.86 & 0.69 \\ \hline
  \multirow{4}{*}{$0.4$}& $\hat{J}$ & 3.81 & 0.57 & 2.26 & 0.27 & 0.59 & 0.16 & 2.83 & 0.31 & 1.99 & 0.46 \\ 
  &$\hat{J}_{B}$ & 3.36 & 0.52 & 1.95 & 0.25 & 0.45 & 0.14 & 2.35 & 0.28 & 1.45 & 0.43 \\ 
  &$\hat{J}_{G}$ & 10.94 & 0.85 & 10.33 & 0.49 & 9.08 & 0.32 & 10.50 & 0.55 & 10.25 & 0.68 \\ 
  &$\hat{J}_{E}$ & 14.00 & 0.90 & 9.36 & 0.43 & 5.54 & 0.23 & 11.46 & 0.54 & 10.91 & 0.67 \\ \hline
  \multirow{4}{*}{$0.6$}& $\hat{J}$ & 2.52 & 0.48 & 1.28 & 0.22 & 0.42 & 0.12 & 1.73 & 0.26 & 1.06 & 0.37 \\ 
  &$\hat{J}_{B}$ & 2.21 & 0.41 & 1.10 & 0.19 & 0.27 & 0.11 & 1.36 & 0.22 & 1.05 & 0.31 \\ 
  &$\hat{J}_{G}$ & 0.65 & 0.54 & 0.41 & 0.26 & -0.31 & 0.14 & 0.30 & 0.32 & 0.18 & 0.42 \\ 
  &$\hat{J}_{E}$ & 7.95 & 0.69 & 5.15 & 0.34 & 3.16 & 0.17 & 6.51 & 0.41 & 5.95 & 0.51 \\ \hline
  \multirow{4}{*}{$0.8$}& $\hat{J}$ & 1.85 & 0.27 & 0.93 & 0.13 & 0.35 & 0.07 & 1.14 & 0.14 & 0.77 & 0.20 \\ 
  &$\hat{J}_{B}$ & 1.16 & 0.21 & 0.56 & 0.10 & 0.15 & 0.06 & 0.62 & 0.12 & 0.52 & 0.16 \\ 
  &$\hat{J}_{G}$ & -1.38 & 0.36 & -1.51 & 0.18 & -1.92 & 0.11 & -1.78 & 0.22 & -1.84 & 0.26 \\ 
  &$\hat{J}_{E}$ & 4.57 & 0.38 & 2.91 & 0.19 & 1.86 & 0.10 & 3.65 & 0.23 & 3.17 & 0.28 \\  \hline
\end{tabular}
\end{table}

\begin{table}[!htt]
\centering
\small
\caption{RB (\%) and MSE ($\times100$) for point estimators of $c$ when the LLOD equals 30\% quantile of $F_0$}
\label{point.c.log.lod3}
\begin{tabular}{cccccccccccc}
  \hline
 &$(n_0, n_1)$& \multicolumn{2}{c}{$(50, 50)$} & \multicolumn{2}{c}{$(100, 100)$} & \multicolumn{2}{c}{$(200, 200)$}& \multicolumn{2}{c}{$(50, 150)$} & \multicolumn{2}{c}{$(150, 50)$}\\ 
  \hline
  $J$& & RB & MSE & RB & MSE& RB & MSE& RB & MSE& RB & MSE\\ \hline\multirow{4}{*}{$0.2$}& $\hat{c}$ & 0.65 & 258.41 & 0.43 & 136.56 & 0.48 & 73.65 & 0.18 & 174.52 & 1.09 & 184.57 \\ 
  &$\hat{c}_{B}$ & -0.50 & 281.55 & -0.31 & 93.98 & 0.06 & 43.73 & -0.48 & 127.55 & -1.24 & 154.60 \\ 
  &$\hat{c}_{G}$ & -3.91 & 233.83 & -2.66 & 117.81 & -2.33 & 62.51 & -3.21 & 154.52 & -2.20 & 121.42 \\ 
  &$\hat{c}_{E}$ & -2.88 & 543.75 & -1.48 & 340.90 & -0.36 & 227.91 & -3.10 & 447.73 & -0.34 & 448.97 \\ \hline
 \multirow{4}{*}{$0.4$}& $\hat{c}$ & 0.91 & 138.59 & 0.60 & 75.44 & 0.29 & 39.32 & 0.36 & 96.11 & 1.06 & 96.88 \\ 
  &$\hat{c}_{B}$ & 0.09 & 101.96 & 0.15 & 52.63 & 0.08 & 24.67 & -0.20 & 69.73 & 0.22 & 69.45 \\ 
  &$\hat{c}_{G}$ & 1.86 & 155.21 & 2.88 & 95.61 & 3.35 & 69.84 & 2.76 & 103.54 & 3.45 & 103.53 \\ 
  &$\hat{c}_{E}$ & -1.56 & 349.31 & -0.54 & 243.76 & -0.41 & 151.41 & -1.56 & 283.63 & 0.22 & 300.01 \\ \hline
 \multirow{4}{*}{$ 0.6$}& $\hat{c}$ & 0.38 & 103.45 & 0.34 & 51.80 & 0.14 & 28.56 & 0.00 & 69.41 & 0.80 & 68.51 \\ 
  &$\hat{c}_{B}$ & 0.13 & 79.98 & 0.14 & 38.93 & -0.01 & 19.06 & -0.21 & 51.75 & 0.47 & 49.46 \\ 
  &$\hat{c}_{G}$ & 2.69 & 171.73 & 4.13 & 122.81 & 4.62 & 100.98 & 4.08 & 130.54 & 4.46 & 135.43 \\ 
  &$\hat{c}_{E}$ & -1.12 & 281.90 & -0.61 & 183.18 & -0.09 & 115.00 & -1.54 & 239.01 & 0.35 & 228.98 \\ \hline
  \multirow{4}{*}{$0.8$}& $\hat{c}$ & 0.00 & 111.15 & 0.02 & 56.67 & -0.10 & 29.70 & -0.44 & 74.27 & 0.41 & 75.64 \\ 
  &$\hat{c}_{B}$ & 0.13 & 77.44 & 0.06 & 37.09 & -0.07 & 18.56 & -0.21 & 52.89 & 0.39 & 44.46 \\ 
  &$\hat{c}_{G}$ & 1.28 & 236.85 & 3.12 & 156.06 & 3.86 & 113.08 & 3.47 & 165.46 & 2.88 & 155.10 \\ 
  &$\hat{c}_{E}$ & -1.36 & 292.78 & -0.56 & 189.16 & -0.74 & 128.34 & -1.43 & 232.44 & 0.44 & 233.30 \\ \hline
\end{tabular}
\end{table}

\begin{table}[!htt]
\centering
\small
\caption{CP (\%) and AL for CIs of $J$ when the LLOD equals 30\% quantile of $F_0$}
\label{CI.J.log.lod3}
\begin{tabular}{cccccccccccc}
  \hline
 &$(n_0, n_1)$& \multicolumn{2}{c}{$(50, 50)$} & \multicolumn{2}{c}{$(100, 100)$} & \multicolumn{2}{c}{$(200, 200)$}& \multicolumn{2}{c}{$(50, 150)$} & \multicolumn{2}{c}{$(150, 50)$}\\
  \hline
   $J$& & CP & AL & CP & AL& CP & AL& CP & AL& CP & AL\\\hline
   \multirow{4}{*}{$0.2$}& $\mathcal{I}_J$ & 94.7 & 0.29 & 94.2 & 0.20 & 94.1 & 0.14 & 93.8 & 0.21 & 94.5 & 0.25 \\ 
  &$\mathcal{I}_{JB}$ & 94.5 & 0.28 & 93.5 & 0.20 & 93.5 & 0.14 & 95.4 & 0.21 & 90.5 & 0.23 \\ 
  &$\mathcal{I}_{JG}$ & 78.3 & 0.35 & 67.7 & 0.25 & 48.2 & 0.18 & 64.5 & 0.27 & 74.7 & 0.31 \\ 
  &$\mathcal{I}_{JE}$ & 71.2 & 0.24 & 75.1 & 0.18 & 76.0 & 0.13 & 67.2 & 0.19 & 72.8 & 0.22 \\ \hline
  \multirow{4}{*}{$0.4$}& $\mathcal{I}_J$ & 96.5 & 0.29 & 96.0 & 0.21 & 94.4 & 0.15 & 95.3 & 0.21 & 95.2 & 0.26 \\ 
  &$\mathcal{I}_{JB}$ & 96.0 & 0.29 & 95.2 & 0.20 & 93.9 & 0.14 & 95.5 & 0.21 & 94.1 & 0.25 \\ 
  &$\mathcal{I}_{JG}$ & 89.2 & 0.32 & 87.8 & 0.23 & 83.5 & 0.16 & 86.4 & 0.24 & 88.2 & 0.28 \\ 
  &$\mathcal{I}_{JE}$ & 75.3 & 0.25 & 78.9 & 0.19 & 80.7 & 0.14 & 72.5 & 0.20 & 78.3 & 0.23 \\ \hline
 \multirow{4}{*}{$0.6$}& $\mathcal{I}_J$ & 95.2 & 0.26 & 96.0 & 0.19 & 94.7 & 0.13 & 94.8 & 0.19 & 94.7 & 0.23 \\ 
 &$\mathcal{I}_{JB}$ & 95.4 & 0.25 & 96.0 & 0.18 & 94.8 & 0.13 & 95.5 & 0.19 & 95.6 & 0.23 \\ 
 &$\mathcal{I}_{JG}$ & 94.0 & 0.29 & 94.3 & 0.20 & 93.8 & 0.14 & 94.1 & 0.22 & 92.3 & 0.25 \\ 
 &$\mathcal{I}_{JE}$ & 78.4 & 0.24 & 80.3 & 0.18 & 82.1 & 0.13 & 75.2 & 0.18 & 80.4 & 0.21 \\ \hline
  \multirow{4}{*}{$0.8$}& $\mathcal{I}_J$ & 96.4 & 0.20 & 95.2 & 0.14 & 95.2 & 0.10 & 95.2 & 0.15 & 95.4 & 0.18 \\ 
  &$\mathcal{I}_{JB}$ & 95.4 & 0.19 & 96.1 & 0.14 & 94.5 & 0.10 & 96.6 & 0.14 & 94.4 & 0.17 \\ 
  &$\mathcal{I}_{JG}$ & 94.6 & 0.23 & 95.2 & 0.16 & 93.5 & 0.11 & 94.9 & 0.17 & 95.0 & 0.20 \\ 
  &$\mathcal{I}_{JE}$ & 80.3 & 0.18 & 83.9 & 0.14 & 84.4 & 0.10 & 76.6 & 0.14 & 82.2 & 0.16 \\ \hline
      \end{tabular}
\end{table}

\begin{table}[!htt]
\centering
\small
\caption{CP (\%) and AL for CIs of $c$ when the LLOD equals 30\% quantile of $F_0$}
\label{CI.c.log.lod3}
\begin{tabular}{cccccccccccc}
  \hline
 &$(n_0, n_1)$& \multicolumn{2}{c}{$(50, 50)$} & \multicolumn{2}{c}{$(100, 100)$} & \multicolumn{2}{c}{$(200, 200)$}& \multicolumn{2}{c}{$(50, 150)$} & \multicolumn{2}{c}{$(150, 50)$}\\
  \hline
   $J$& & CP & AL & CP & AL& CP & AL& CP & AL& CP & AL\\\hline
  \multirow{4}{*}{$0.2$}& $\mathcal{I}_c$  & 90.0 & 6.50 & 91.2 & 4.63 & 93.7 & 3.41 & 92.0 & 5.26 & 89.9 & 5.26 \\ 
  &$\mathcal{I}_{cB}$ & 80.9 & 5.72 & 80.6 & 2.76 & 79.3 & 1.73 & 75.9 & 2.88 & 80.6 & 4.26 \\ 
  &$\mathcal{I}_{cG}$ & 90.2 & 5.64 & 91.6 & 3.84 & 90.5 & 2.69 & 91.2 & 4.38 & 91.3 & 4.15 \\ 
  &$\mathcal{I}_{cE}$ & 92.4 & 7.15 & 95.0 & 6.28 & 95.0 & 5.27 & 90.3 & 6.44 & 94.4 & 6.88 \\ \hline
 \multirow{4}{*}{$0.4$}& $\mathcal{I}_c$ & 91.1 & 4.67 & 93.2 & 3.40 & 93.1 & 2.51 & 91.9 & 3.89 & 91.4 & 3.83 \\ 
  &$\mathcal{I}_{cB}$ & 86.6 & 3.16 & 86.5 & 2.12 & 86.0 & 1.49 & 80.2 & 2.12 & 90.3 & 3.02 \\ 
  &$\mathcal{I}_{cG}$ & 95.1 & 4.72 & 92.5 & 3.36 & 85.7 & 2.39 & 92.6 & 3.62 & 90.1 & 3.28 \\ 
  &$\mathcal{I}_{cE}$ & 94.0 & 6.15 & 95.8 & 5.17 & 96.2 & 4.34 & 92.7 & 5.48 & 96.8 & 5.85 \\ \hline
 \multirow{4}{*}{$ 0.6$}& $\mathcal{I}_c$ & 92.2 & 3.97 & 94.0 & 2.94 & 94.0 & 2.14 & 93.7 & 3.30 & 93.7 & 3.33 \\ 
  &$\mathcal{I}_{cB}$ & 89.6 & 2.98 & 91.0 & 2.08 & 89.6 & 1.47 & 85.3 & 2.14 & 92.0 & 2.57 \\ 
  &$\mathcal{I}_{cG}$ & 95.0 & 4.77 & 89.3 & 3.39 & 78.0 & 2.43 & 89.6 & 3.55 & 87.2 & 3.29 \\ 
  &$\mathcal{I}_{cE}$ & 93.6 & 5.61 & 95.5 & 4.75 & 96.4 & 3.85 & 92.1 & 5.00 & 96.0 & 5.35 \\ \hline
  \multirow{4}{*}{$0.8$}& $\mathcal{I}_c$ & 92.7 & 4.08 & 94.1 & 3.02 & 95.0 & 2.20 & 92.5 & 3.34 & 93.3 & 3.45 \\ 
  &$\mathcal{I}_{cB}$ & 93.1 & 3.30 & 95.2 & 2.33 & 94.7 & 1.64 & 93.2 & 2.63 & 95.0 & 2.66 \\ 
  &$\mathcal{I}_{cG}$ & 96.3 & 5.97 & 94.2 & 4.25 & 87.6 & 3.04 & 93.4 & 4.34 & 94.2 & 3.91 \\ 
  &$\mathcal{I}_{cE}$ & 88.3 & 5.55 & 94.6 & 4.81 & 96.1 & 3.95 & 86.2 & 5.00 & 95.4 & 5.42 \\ \hline
   \end{tabular}
\end{table}

\end{document}